\documentclass{article}

\usepackage[paperwidth=210mm,paperheight=297mm,centering,hmargin=2.5cm,vmargin=2.5cm]{geometry}
\usepackage{pgfplots}
\usepackage{graphicx}
\usepackage{dcolumn}
\usepackage{bm}
\usepackage[colorlinks= true, linkcolor=blue, citecolor=blue, urlcolor=blue]{hyperref}
\usepackage{lipsum}
\usepackage{caption}
\usepackage{subcaption}
\usepackage{amsthm}
\usepackage{amssymb}
\usepackage{amsfonts}
\usepackage{array,multirow,graphicx}
\usepackage{float}
\parskip=0pt
\usepackage{amsmath}
\usetikzlibrary{calc,arrows.meta,bending}
\usepackage{xparse}
\usepackage{xargs} 
\usepackage{tikz}
\usepackage{todonotes}
\usepackage[framemethod=tikz]{mdframed}
\usepackage{lineno}
\newmdenv[leftmargin=\dimexpr-0.4em, innerleftmargin=0.5em,
          rightmargin=\dimexpr-0.4em, innerrightmargin=0.5em,
          linewidth=2pt,linecolor=red, topline=false, bottomline=false,
          innertopmargin=0pt,innerbottommargin=0pt,skipbelow=0pt,skipabove=0pt,%
         ]{notex}

\newenvironment{note}%
 {\vskip\dimexpr\dp\strutbox-\prevdepth\relax\notex\strut\ignorespaces}%
 {\xdef\notetpd{\the\prevdepth}\endnotex\vskip-\notetpd\relax}

\let\oldtodo\todo

\makeatletter%
\DeclareDocumentCommand{\todo}{ O{} +g +d<> }{%
    \IfNoValueTF{#2}{\relax}{%
         \oldtodo[caption={#2},size=\footnotesize,#1]{\renewcommand{\baselinestretch}{1}\selectfont\sffamily#2\par}%
    }%
    \IfNoValueTF{#3}{\relax}{%
        \IfNoValueTF{#2}{
            \begin{note}%
                \begin{internallinenumbers}%
                    \indent%
                    #3%
                \end{internallinenumbers}%
            \end{note}%
    }{
        \vspace{-0\baselineskip}%
        \begin{note}%
            \begin{internallinenumbers}%
                \indent%
                #3%
            \end{internallinenumbers}%
        \end{note}%
        }%
    }%
}%
\makeatother
\usepackage{soul}

\makeatletter
\newcommand*{\rom}[1]{\expandafter\@slowromancap\romannumeral #1@}
\makeatother

\pgfplotsset{compat=1.16}
\bibliographystyle{siam}
\setlength {\marginparwidth }{2cm} 
\usepackage{authblk}
\begin{document}

\newtheorem{remark}{Remark}
\newtheorem{theorem}{Theorem}
\newtheorem{corollary}{Corollary}
\newtheorem{lemma}{Lemma}
\newtheorem{prop}{Proposition}
\newtheorem{defn}{Definition}
\newtheorem{assume}{Assumption}
\newcounter{example}
\newenvironment{example}[1][]{\refstepcounter{example}\par\medskip
   \noindent \textbf{Example~\theexample. #1} \rmfamily}{\medskip}

\title{\textbf{A Scalable Bayesian Persuasion Framework for Epidemic Containment on Heterogeneous Networks}}

\author[1]{Shraddha Pathak} 
\author[2]{Ankur A. Kulkarni\thanks{Corresponding author. This research was supported by the grant
CRG/2019/002975 of the Science and Engineering
Research Board, Department of Science and Technology,
India.}}
\affil[1]{Indian Institute of Science Education and Research Pune, Pune 411008, India. Email: \texttt{shraddhapathak1999@gmail.com}}
\affil[2]{Indian Institute of Technology Bombay, Mumbai, 400076, India. Email: \texttt{kulkarni.ankur@iitb.ac.in}}
\renewcommand\Affilfont{\itshape\small}
\date{}

\maketitle

 \begin{abstract}
 During an epidemic, the information available to individuals in the society deeply influences their belief of the epidemic spread, and consequently the preventive measures they take to stay safe from the infection. In this paper, we develop a scalable framework for ascertaining the optimal information disclosure a government must make to individuals in a networked society for the purpose of epidemic containment. 
 This problem of information design problem is complicated by the heterogeneous nature of the society, the positive externalities faced by individuals, and the variety in the public response to such disclosures. We use a networked public goods model to capture the underlying societal structure. Our first main result is a structural decomposition of the government's objectives into two independent components -- a component dependent on the utility function of individuals, and another dependent on properties of the underlying network. Since the network dependent term in this decomposition is unaffected by the signals sent by the government, this characterization simplifies the problem of finding the optimal information disclosure policies. We find explicit conditions, in terms of the risk aversion and prudence, under which no disclosure, full disclosure, exaggeration and downplay are the optimal policies. The structural decomposition results are also helpful in studying other forms of interventions like incentive design and network design.

 \end{abstract}

\section{Introduction} \label{sec:intro}
The recent COVID-19 pandemic challenged the global public health system in unforeseeable ways. Governments across the world formed various policies, such as mandating testing before travel, enforcing lockdown, fining non-mask wearers and so on, as attempts to control the spread of the pandemic. Mobile applications with varied purposes such as contact tracing and health monitoring \cite{mobileapps} were also developed in many countries. A universal feature of these applications is their ability to share information about the epidemic with the public. The prevalence of smartphones and real-time information gathering and sharing technologies, opens the possibility for the government to strategically control the information environment of the public and induce pro-social actions by changing their perception of the epidemic spread. Inspired by this novel aspect of intervention design to control the epidemic, in this paper, we \textit{study the efficacy of signals sent by the government in achieving social objectives and determine the optimal information disclosure policy for epidemic containment.}

Designing interventions and policies, in general, is not a trivial task, especially given the complex structure of the society and the multiple factors which influence the decisions of individuals, such as their perception of the epidemic spread and the actions of other individuals. Individuals in the society are not homogeneously connected. Different individuals have different social circles and meet different number of people. Moreover, the precautionary measures taken by individuals to stay safe during epidemics depend on their beliefs of the infection level in the society. Individuals tend to be more careful when they think the epidemic is more prevalent, while they tend to relax their guard when they believe that there are not many infections in the population. Precautionary actions taken by individuals also bestow benefits (perhaps, unintended) or positive externalities on their contacts since these actions reduce the contacts' likelihood of getting infected. These externalities lead to strategic interactions (with a substitutes effect) among the actions of neighbouring individuals. For example, knowing that one's friend does not meet many people, one might relax one's guard and not take the necessary precautions while meeting this friend. These effects complicate the problem of intervention design since incentivizing an individual to be more careful can have a cascading detrimental effect on neighbours' actions, possibly leading to an overall negative effect of the incentive.

While accounting for the heterogeneous nature of the society and the strategic decisions of individuals, governments devise policies and intervention plans to contain the epidemic spread. Unlike the 1918 Spanish flu pandemic, the advances in technologies have equipped us with innovative ways to handle infectious diseases. In this paper, we study how signaling policies can be designed to send information about the epidemic, which change individuals' actions by changing their beliefs of the prevalence of the epidemic, thereby inducing certain social outcomes. 
Another complication arises from the fact that the government usually has more than one (mathematically different but heuristically similar) objectives that it would want the society to attain.

In this paper, we develop a framework for ascertaining the optimal information disclosure by a government to agents (common people) that are connected by a network and face positive externalities. The payoff of each agent is the difference between its benefit, i.e. the probability of being safe, and the cost that it incurs. The benefit depends on the efforts (such as wearing masks) made by the agent and its neighbours, but the cost are only associated with the agent's own effort. These positive externalities result in a public goods game among agents situated on an underlying network and having incomplete information about the infection level in the society, where each agent has to choose its optimal effort level. The signals sent by the government affect the beliefs of these agents, and government crafts these signals to maximize its objectives. We study the optimal signals for two possible objectives of the government: maximizing society's total effort to stay safe during the epidemic, and maximizing the probability that a randomly selected individual is safe. For each signal, the resulting public goods game between agents has multiple Nash equilibria. Thus we consider two possible attitudes for computing the government's objective: an optimistic one which considers the best case over all Nash equilibria, and a pessimistic one which considers the worst case.

Our first main finding is the structural decomposition of the above complicated objectives of the government into two independent components: a network dependent term and a term that depends only on the utility function of the agents. 
Since the graph-dependent term is not affected by informative signals, such reductions are valuable as they allow us to circumvent the complexities that arise due to the network, and design signals affecting effectively unilateral 
individual actions. These reductions are also helpful while studying other forms of intervention design, as we discuss later in the paper.

Using the above result, we find conditions under which full disclosure, no disclosure, exaggeration, and downplaying are the optimal policies. These conditions are in terms of the risk aversion and prudence of the benefit function. Furthermore, we find sufficient conditions, which do not depend on the beliefs of individuals, under which the extreme policies of full information and no information are optimal. This opens up the possibility of inferring the optimal policies solely based on the nature of the probability of being safe in different infection states, without requiring details about the beliefs of individuals in the society. Remarkably, we also find that the solutions of the optimal signaling policy are robust under the two different objectives and the two attitudes of the government, to a large extent. Moreover, the structural characterization also makes our results for optimal public signaling policies robust to different network structures, and thus scalable to large networks.

\subsection{Related works} 

For a given belief of the epidemic state, the game played by individuals in the society falls in the realm of network games \cite{jackson, bramoullegames2015}. The two broad categories of games on network are games of strategic complements (where actions of players mutually reinforce one another) and games of strategic substitutes (actions taken by players disincentivize others from taking the same actions). As discussed before, the epidemic game is a game of strategic substitutes and we model the protection received from the precautionary actions of individuals as a public good \cite{bramoulle} -- giving public, non-excludable benefits (from staying safe during the epidemic) but coming at individual costs (of efforts).

On top of the epidemic game played by individuals in the society at a given belief of the epidemic state, by using the framework of Bayesian persuasion and information design \cite{persuasion}, we allow for the government to strategically send signals about the epidemic state to induce different beliefs among individuals in the public, and to thereby achieve certain social outcomes by changing the underlying epidemic game (modelled as a public goods game) being played by individuals in the society. Bayesian persuasion is classically studied as a two-player game where one player (called the sender) decides on a signaling policy according to which information about a payoff-relevant state is sent to the other player (called the receiver). Upon receiving the signal, the receiver chooses an action, which determines the payoff for both, the receiver as well as the sender. Multi-receiver extension of this two-player game, often referred to as information design, has also received wide interest \cite{bergemann2019information}. Even though multi-receiver settings with inter-dependent actions are receiving some attention now \cite{multiagentleverage}, these settings are less well understood because of the complexities associated with them. Our work gives insights on information design in such settings. 

The frameworks of Bayesian persuasion and information design have found applications in a wide range of settings like vehicle routing \cite{das2017reducing}, voting \cite{voters} and grade disclosures \cite{gradedisclosure} and our work is an application in the important problem of epidemic control. Despite the tremendous amount of literature on studying the role of human decision making on the spread of epidemics \cite{epidemicgamereview}, not enough attention has been paid to designing effective intervention strategies. While \cite{coronagames} studies the effect of direct incentives on the decision of individuals, \cite{informingpublicpandemic} and \cite{eproach} study effect of information on the behaviour of individuals. 

The work closest to ours is Ref. \cite{informingpublicpandemic} which also poses the problem of informing the public about a pandemic in the framework of Bayesian persuasion. However, our work differs from theirs in two crucial ways. First, we consider a continuous action space, in contrast to the discrete (binary) action space considered in \cite{informingpublicpandemic}. Furthermore, one of our main focus is on the effect of the network of connections among individuals on the design of interventions, whereas \cite{informingpublicpandemic} assumes a homogeneously connected population.

\subsection{Organization of the paper} We begin by discussing relevant literature on the public goods game and Bayesian persuasion in Section \ref{sec:preliminaries}, which together equip us with tools to formulate and analyse our problem. In Section \ref{sec:model}, we formulate our model and discuss its salient features. Section \ref{sec:results} contains our main results and we conclude the paper in Section \ref{sec:discussion} by discussing the implications of our analysis on policy design. Appendix \ref{sec:Analysis} presents the proofs of our results.

\section{Preliminaries and notation} \label{sec:preliminaries}

The preliminaries discussed in this section are divided into two components. We begin by reviewing the public goods game on networks \cite{bramoulle, parthe}, which captures the interactions among individuals in the society during an epidemic crisis. In the second half of this section, we discuss the framework of Bayesian persuasion \cite{persuasion}, which allows us to account for the role of signals sent by the government in our problem.

\subsection{Public goods game on networks} \label{subs:publicgood}

Bramoull{\'e} and Kranton \cite{bramoulle} introduced a public goods game on a network. Public goods are goods which involve benefits which are non-excludable. The costs of actions taken for provision of the goods, however, are incurred at the individual level, leading to strategic interactions among individuals. The strategic substitutes nature of the game allows for individuals to free-ride on the actions of others, while obtaining full benefit from the provision of the good.

In the model described by Bramoull{\'e} and Kranton in \cite{bramoulle}, individuals are situated on the nodes of a graph $G(V,E)$ and are connected to other individuals whose actions give them direct benefits. An individual $k\in V$ exerts effort $x_k\in [0,\infty)$ for the provision of the public good. The collective neighbourhood effort (for individual $k$) $\mathcal{E}_k(x) := x_k + \sum_{j\in N_k} x_j$ yields a benefit according to a strictly concave increasing function $b(\cdot)$, while the individual effort comes at a marginal cost $c$. Thus, the utility of individual $k$ in the public goods game is given by equation \eqref{eq:public goods} below.
\begin{align}
    U_k(x) = b(x_k + \sum_{j\in N_k} x_j) - c\cdot x_k = b(\mathcal{E}_k(x)) - c\cdot x_k \label{eq:public goods}
\end{align}

Rational individuals playing this game exert effort which maximizes their utility given by \eqref{eq:public goods} above. Let $ e^*$ be the effort level where marginal benefit equals marginal cost, i.e., \begin{align}
     b'(e^*)=c. \label{eq:net_eff}
\end{align} This effort level $e^*$ is what we call the `unilateral effort'. This is the effort exerted by individuals in the absence of network externalities. Given the neighbourhood effort $\sum_{j\in N_k}x_j$, Bramoull{\'e} and Kranton \cite{bramoulle} show that the best response strategy for player $k$ is $x_k = e^* - \sum_{j \in N_k}x_j$, as long as $\sum_{j\in N_k}x_j\leq e^*$. Otherwise, $x_k=0$.
Further, they characterized the Nash equilibria\footnote[1]{A Nash equilibrium is a solution concept for the entire system of individuals in which every individual is playing its best response, and thus \textit{no} individual has an incentive to deviate.} \cite{nash1950equilibrium} of the game into three categories. A \textit{specialized equilibrium} is when individuals in the society either exert full effort $e^*$, or exert no effort and free-ride on their neighbourhood effort i.e., $x_k\in \{0,e^*\}$, $\forall k\in V$. In a \textit{distributed equilibrium} every individuals exerts some positive effort i.e., $x_k >0$ $\forall k\in V$. A \textit{hybrid equilibrium} is one which is neither specialized nor distributed i.e., and some individuals free-ride while some others exert effort less than $e^*$. 

Remarkably, Bramoull{\'e} and Kranton \cite{bramoulle} map every specialized equilibrium of this game to a maximal independent set\footnote[2]{An \textit{independent set} $S$ of a graph is a subset of the vertices of the graph ($S\subset V$) which are all mutually non-adjacent i.e., no two nodes in $S$ are neighbours of each other. A \textit{maximal independent set} is an independent set which is not strictly a subset of any other independent set of the graph i.e., for any node $k\in S^c$, $S\cup\{k\}$ is no longer an independent set.} $S$ of the underlying network $G$. In this mapping, every individual in the independent set exerts effort $e^*$, while the ones not in the independent set free-ride. This is because every node in the independent set is not connected to any other element of the independent set (by definition) and every node not in the independent set is connected to at least one node in the independent set (since if it is not, it can be added to the original set to form an even larger independent set, contradicting the fact that it is maximal) leading to $\sum_{j\in N_k}x_j =0$, when $k\in S$ and $\sum_{j\in N_k} x_j\geq e^*$, for $k \notin S$.

Since the Nash equilibria effort profiles of the public goods game might not lead to the best welfare among all possible effort profiles, Pandit and Kulkarni \cite{parthe} ask how the different types of equilibria (specialized, distributed and hybrid) perform among themselves. Remarkably, they show that among all Nash equilibria, the specialized equilibrium corresponding to the $w$-weighted maximum independent set\footnote[3]{For weights given by the vector $w=(w_1,...,w_n)$, where $w_k$ corresponds to the weight of the $k^{th}$ node, a \textit{$w-$weighted maximum independent set} is the maximal independent set $S$ maximizing $\alpha_w(G)=\sum_{k\in S} w_k$.} attains the maximum $w$-weighted aggregate effort of the society ($\sum_{k\in V} w_k\cdot x_k$). Moreover, they show that distributed equilibria, whenever they exist, lead to the minimum aggregate cost ($c\cdot\sum_{k\in V} x_k$) among all Nash equilibria. These results are the building blocks for our analysis and have been discussed in detail in Lemma \ref{lem:parthe}.

\subsection{Bayesian persuasion and the information design framework} \label{subs:BayPer}
We now take a detour from the realm of public goods games and instead focus our study on a particular kind of game of asymmetric information called Bayesian persuasion \cite{persuasion}. In this game, there are two players -- the sender and the receiver. There is a payoff-relevant state which is unknown to the receiver. However, both players have a common belief (modelled as probability distributions over the set of the states of the world) on the realised value of the state. The sender commits to a signaling policy $\pi(\cdot|\cdot)$ (which is a conditional distribution $\pi(\cdot|r)$ on the signalled states of the world, given the real state of the world $r$) according to which signals about the state are sent to the receiver. The receiver updates its belief based on the signal received and the committed policy using Bayes' rule, and takes an action that maximizes its own expected utility. The receiver's action determines the payoff of the receiver as well as the sender. The sender's action is to choose the policy which maximizes its own expected payoff.

When the uncertain state of the world and the signals sent are both binary taking values in the set $\{A, B\}$, finding the optimal signaling policy $\pi(s| r)$ ($s$ stands for the signalled state of the world) is equivalent to finding two real numbers $p_a, p_b\in [0,1]$ where $p_a= \pi(s = A|r=A)$ is the probability with which $A$ is communicated when the real state is $A$ and $p_b= \pi(s = B|r=B)$ is the probability with which $B$ is communicated when the real state is $B$. In this system, a \textit{full information disclosure policy} corresponds to the situation where all the information available is truthfully communicated to the receiver i.e.,
\begin{align}
    \text{Full disclosure:}\hspace{2em} \pi(s=A|r=A)=1, \hspace{1em} \pi(s=B|r=B)=1. \label{eq:fulldisc}
\end{align}
Equivalently, full disclosure corresponds to the case when $(p_a,p_b)=(1,1)$. The case where the sender always lies about the state $(p_a, p_b)=(0,0)$ also fully discloses the real state since the receiver can correctly infer the real state (because it knows that the sender always lies). A \textit{no information disclosure policy}, on the other hand, corresponds to the case where the probability with which state $A$ is communicated is the same irrespective of the real state i.e., 
\begin{align}
    \text{No disclosure:}\hspace{2em}\pi(s=A|r=A) = \pi(s=A|r=B). \label{eq:nodisc}
\end{align}
This is all cases where $p_a+p_b=1$. In this, the posterior belief generated by the signal is the same as the prior belief. There are many policies in between these two extremes where partial information can be communicated to the receiver.

Kamenica and Gentzow \cite{persuasion} show that the nature of the optimal signaling policy depends on the curvature (concavity/convexity) of the sender's utility as a function of the receiver's beliefs. When the sender's utility is strictly concave in the receiver's belief, providing no information is optimal. On the other hand, when the sender's utility is strictly convex in the receiver's belief, giving full information is optimal. In the intermediate cases where the sender's utility is neither strictly concave nor convex in the receiver's beliefs, an intermediate disclosure policy maybe optimal. These results are proved using a convex analysis argument involving the concave closure of sender's utility as a function of the receiver's belief. The concave closure $\mathcal{C}_{f}$ of a function $f : [a,b] \rightarrow \mathbb{R}$ is defined as
\begin{align}
    C_f(x) := \inf \{ \ell(x) | \ell \in L_f \} \qquad \forall x \in [a,b] \label{eq:concave closure}
\end{align} where $L_f := \{\ell | \ell \text{ is concave and } \ell \geq f \text{ over } [a,b]\}$. For any function $f$, $C_f$ is a concave function.

\section{Model} \label{sec:model}
\subsection{Setting} \label{sec:setting}

We now briefly outline the precise setting we consider. We consider a society of individuals that is affected by an epidemic. Individuals interact socially through a heterogeneous but fixed number of interactions and take precautionary measures such as wearing masks, sanitizing, and vaccination to avoid getting infected. These actions can be quantified in terms of the effort individuals take to stay safe during the epidemic. 
Individuals benefit from their own effort and the effort taken by their neighbours, since it increases their probability of being safe. 
This causes the benefit obtained from the effort to be non-excludable, though the costs are incurred by the individual. Thus public health is a networked public good \cite{bramoulle} and the amount of effort an individual makes is a strategic decision.

The probability that an individual is safe during the epidemic and the benefit thereof, is also dependent on the infection level in the population.
Individuals do not have the exact knowledge of the infection level, but they have a belief over what it may be. Therefore, individuals choose their preventive measures by maximizing their expected payoff calculated with respect to the distribution given by this belief. Knowing the above situation, the government tries to induce socially optimal outcomes among the individuals of the society. It does so by leveraging the uncertainty that individuals have about the infection level and sending signals about it so as to alter their beliefs, and shapes these signals so as to maximize its own objective.

In the next section, we formalize this setting by writing down a mathematical model for it.

\subsection{Epidemic game as a network game}
Let the network of individuals in the society be given by the graph $G(V,E)$, where the individuals correspond to the vertices $V$ of the graph and the edges $E$ represent the interactions among the individuals. Let the effort taken by individual $k\in V$ to stay safe during the epidemic be $x_k\in [0,\infty)$. The infection level in the population is $\iota\in I=\{h,l\}$ (high or low), which individuals in the society are unaware about. The prior belief of individuals about the infection level is given by a probability distribution $\mu_0$ over $I$ i.e., individuals believe it to be a high infection state with probability $\mu_0(h)$ and a low infection state with probability $\mu_0(l) = 1-\mu_0(h)$.

An individual benefits from the cumulative effort made by other individuals in its neighbourhood. Let $$\mathcal{E}_k(x) := x_k + \sum_{j\in N_k} x_j$$
denote this cumulative effort in the neighbourhood of individual $k\in V$. 
The individual $k$ derives a benefit $b(\mathcal{E}_k(x); \iota)$ from staying safe when the infection level is $\iota$. We interpret the benefit function as the ``probability of being safe'' in the epidemic. As such, the benefit function $b$ is increasing and strictly concave in effort and is such that the benefit in the high infection state is strictly lesser than the benefit in the low infection state for the same effort level i.e., $b(x;h)<b(x;l)$, $\forall x \in [0, \infty)$.
The marginal cost associated with one's own effort is $c$, whereby an effort $x_k$ incurs a cost $c\cdot x_k$. When the infection level is known to be $\iota$, the utility of an individual is a deterministic quantity given by 
\begin{align*}
    U_k(x; \iota) = b(\mathcal{E}_k(x); \iota) - c\cdot x_k.
\end{align*}
However, since individuals in the society are uncertain about the infection level in the population, they choose actions which maximize their expected utility, which is given by
\begin{align*}
    \mathbb{E}_{\iota\sim \mu} [U_k(x; \iota)] = \mu(h)\cdot b(\mathcal{E}_k(x); h) + \mu(l)\cdot b(\mathcal{E}_k(x); l) - c\cdot x_k,
\end{align*}
where $\mu$ is the belief that individuals have about the infection level. 

In the absence of any signal from the government, we have $\mu=\mu_0$. However, the government provides information by communicating an infection level $\iota_S$.
The government's signaling policy $\pi$ is a conditional distribution of the signalled infection level $\iota_S\in I$ given the real infection level in the population $\iota$ i.e., $\iota_S \sim \pi(\cdot|\iota)$. We assume the individuals are Bayesian rationals, whereby the  belief about the infection level is updated according to Bayes' rule after receiving the government's signal, and is given by
\begin{align*}
    \mu_\pi(\iota|\iota_S) = \frac{\pi(\iota_S|\iota)\mu_0(\iota)}{\pi(\iota_S|\iota)\mu_0(\iota)+\pi(\iota_S|\iota^c)\mu_0(\iota^c)}.
\end{align*}
On receiving a signal $\iota_S$ from the government, individual $k$'s expected utility is 
\begin{align}
    \mathbb{E}_{\iota\sim \mu_\pi(\cdot|\iota_S)}[U_k(x; \iota) | \iota_S] &= \mathbb{E}_{\iota\sim \mu_\pi(\cdot|\iota_S)}[b(\mathcal{E}_k(x); \iota) - c\cdot x_k] \nonumber \\
    &= \mu_\pi(\iota=h|\iota_S)b(\mathcal{E}_k(x); h)+\mu_\pi(\iota=l|\iota_S)b(\mathcal{E}_k(x); l) -c\cdot x_k. \nonumber \\
    &= \tilde{b}(\mathcal{E}_k(x);\mu_\pi,\iota_S) - c\cdot x_k, \label{eq:epidemic game}
\end{align}
where $\tilde{b}(\mathcal{E}_k(x); \mu_\pi, \iota_S) := \mu_\pi(\iota=h|\iota_S)b((\mathcal{E}_k(x))_{\iota_S}; h)+\mu_\pi(\iota=l|\iota_S)b((\mathcal{E}_k(x))_{\iota_S}; l)$.
Thus, given a signal $\iota_S$, individual $k$ chooses effort level $x_k$ that maximizes this expected utility i.e., 
\begin{align}
    \textit{Individual's problem: }\hspace{2em}x_k = {\arg\max}_{x_k\in [0,\infty)} \mathbb{E}_{\iota\sim \mu_\pi(\cdot|\iota_S)}[U_k(x; \iota)].
\end{align}

Comparing \eqref{eq:public goods} and \eqref{eq:epidemic game}, we see that for every belief distribution $\mu_\pi$ and for every signal $\iota_S$, a distinct public goods game is played, with the benefit function as $\tilde{b}(\cdot; \mu_\pi, \iota_S)$. The monotonicity and concavity of $\tilde{b}$ in effort $x$ follows from the fact that convex combination of increasing and concave functions is also increasing and concave. 


\def\NE{{\rm NE}}
The expected utilities defined in \eqref{eq:epidemic game} results in a game at the level of the individuals in the society. The resulting Nash equilibrium  is defined as follows.
\begin{defn}[Nash equilibrium for the epidemic game]
Effort profile $x^*=(x^*_1,...,x^*_n)$ is a Nash equilibrium of the epidemic game played by the individuals in the society on receiving signal $\iota_S$ if
\begin{align*}
    x^*_k = x^*_k(\iota_S;\pi)&= \arg \max_{x_k} \mathbb{E}_{\iota\sim\mu_\pi(\cdot|\iota_S)}[U_k(x; \iota)|\iota_S], \hspace{1em} \forall k\in V\end{align*}
    where, $\mathbb{E}_{\iota \sim \mu_\pi(\cdot|\iota_S)}[U_k(x;\iota)| \iota_S]$ is given by \eqref{eq:epidemic game}.
    Let $\NE(\iota_S;\pi)$ denote the set of all Nash equilibria.
    \label{def:NE}
\end{defn}

We consider two aggregate-level objectives (denoted by $\mathcal{O}(x^*; \pi)$) 
that the government may have during an epidemic. The first objective is to maximize the expected total effort that all the individuals in the society together exert to stay safe during the epidemic. This is given by
    \begin{align}
        \mathcal{O}(x^*; \pi)=\mathbb{E}_{\iota\sim \mu_0}\mathbb{E}_{\iota_S\sim \pi(\cdot|\iota)}\bigg[\sum_{k\in V} x^*_k(\iota_S;\pi)\bigg]. \label{eq:obj1}
    \end{align}
The second objective we consider is to maximize the probability that a randomly selected individual is safe. Since the benefit function $b(;\iota)$ in an individual's utility equation \eqref{eq:epidemic game} denotes the probability with which it is safe during the epidemic, $\frac{1}{n}$ times the sum over every individual's benefit can be interpreted as the probability of a random individual being safe. This objective is thus given by
    \begin{align}
        \mathcal{O}(x^*; \pi)
        &= \frac{1}{n}\mathbb{E}_{\iota\sim \mu_0}\mathbb{E}_{\iota_S\sim \pi(\cdot|\iota)} \bigg[\sum_{k\in V}b(\mathcal{E}_k (x^*(\iota_S;\pi)); \iota)\bigg].\label{eq:obj2}
    \end{align}
From now on, we will represent the above double expectations in the government's objectives (in equations \ref{eq:obj1} and \ref{eq:obj2}) as $\mathbb{E}_{\iota, \iota_S}[\cdot]$.

Moreover, since the epidemic game at the level of the public can lead to multiple Nash equilibria, each performing differently in terms of the objectives, we consider two attitudes towards decision making: optimistic and pessimistic. An optimistic government looks for a signaling policy when the best Nash equilibrium is attained. A pessimistic government, on the other hand, is one which finds the best signaling policy when the worst possible Nash equilibrium is attained.

The solution for the government's problem of finding the optimal policy $\pi$ is given by the Stackelberg equilibrium \cite{stackelberg} of the game consisting of the decisions of the individuals as well as the government, as defined below.
\begin{defn}[Optimistic Stackelberg equilibrium for government-society incomplete information game]

$(\pi^*, x^*)$ is the optimistic Stackelberg equilibrium of the combined game between the individuals in the society and the government if
\begin{align*}
    \pi^* & \in \arg\max_\pi \max_{x^*\in \NE(\cdot;\pi)} \mathcal{O}(x^*; \pi),
\end{align*}
where $x^*\in \NE(\iota_S;\pi^*)$ satisfies definition \ref{def:NE}. \label{def:OSE}
\end{defn}

\begin{defn}[Pessimistic Stackelberg equilibrium for government-society incomplete information game]

$(\pi^*, x^*)$ is the pessimistic Stackelberg equilibrium of the combined game between the individuals in the society and the government if
\begin{align*}
    \pi^* &\in \arg\max_\pi \min_{x^*\in \NE(\cdot;\pi)} \mathcal{O}(x^*; \pi),
\end{align*}
where $x^*\in \NE(\iota_S;\pi^*)$ satisfies definition \ref{def:NE}. \label{def:PSE}
\end{defn}

\section{Results} \label{sec:results}

In this section, we present our two main results. The first result is a structural decomposition of the government's objectives into two independent components, one containing network related terms and another containing terms dependent on unilateral 
effort (as defined in \eqref{eq:net_eff}). The second set of results are related to government's optimal information disclosure policy. These results are obtained for every government objective and every government attitude. While the proofs of the results are presented in the appendix \ref{sec:Analysis}, we present their brief outline alongside the results here. 

Our analysis and results are based on the following two assumptions:

\begin{assume}
The unilateral effort of individuals $e^*$ \eqref{eq:net_eff} increases as the belief of the high infection state $\mu$ increases. \label{asu:1}
\end{assume}

This is a natural assumption since we expect one to take more effort to stay safe when one believes the infection level is high with a higher probability. 

\begin{assume}
The low infection level is infectious enough to ensure that individuals take some positive unilateral action, even when they are certain that the infection level is low (i.e., $c<b'(0;l)$).  \label{asu:2}
\end{assume}

While the analysis can be easily extended to relax Assumption \ref{asu:2}, the results presented are for cases when this assumption holds. However, we present an illustrative example for a case when this assumption is relaxed.

\subsection{Structural characterisation of the government's objectives} \label{subsec:res_struc}

\begin{theorem} (Structural characterization of government objectives)
\begin{enumerate}
    \item (Optimistic government maximizing aggregate effort)
    \begin{align*}
        \max_{x^*\in \NE} \mathbb{E}_{\iota, \iota_S}\bigg[\sum_{k \in V} x_k^*\bigg] = \alpha(G)\cdot \mathbb{E}_{\iota,\iota_S}[e^*]
    \end{align*}
    where $\alpha(G)$ is the size of the maximum independent set of the graph. Specialized equilibrium corresponding to the maximum independent set attains this equilibrium.
    \item (Pessimistic government maximizing aggregate effort) \begin{align*}
        \min_{x^*\in \NE}\mathbb{E}_{\iota,\iota_S}\bigg[\sum_{k \in V} x^*_k\bigg] = m(G)\cdot \mathbb{E}_{\iota,\iota_S}[e^*]
    \end{align*} where $m(G)$ is a constant that depends only on the graph $G$. Whenever distributed equilibria exist, they attain this aggregate effort.
    \item (Pessimistic government maximizing probability of random individual being safe) \begin{align*}
        \min_{x^*\in \NE}\mathbb{E}_{\iota,\iota_S}\bigg[\mathbb{P}(\text{a random node is safe})\bigg] = \mathbb{E}_{\iota,\iota_S}[b(e^*_{\iota_S};\iota)]
    \end{align*} Whenever distributed equilibria exist, they attain this value of the objective.
    \item (Optimistic government maximizing probability of random individual being safe) 
    
    Let $\sigma_b:= \frac{b(ne^*)-b(e^*)}{ce^*(n-1)}$.
\begin{enumerate} 
    \item As $\sigma_b\to 0$,  $$\max_{x^*\in \NE}\mathbb{E}_{\iota,\iota_S}\bigg[\mathbb{P}(\text{a random node is safe})\bigg]=\mathbb{E}_{\iota, \iota_S}[b(e^*_{\iota_S}; \iota)].$$ All Nash equilibria result in the above expected objective.
    \item As $\sigma_b\to 1$,  $$\max_{x^*\in \NE}\mathbb{E}_{\iota,\iota_S}\bigg[\mathbb{P}(\text{a random node is safe})\bigg]=\mathbb{E}_{\iota, \iota_S} [b(e^*_{\iota_S}; \iota) + f(G)\cdot e^*_{\iota_S}],$$ where $f(G)$ is a graph dependent constant. Specialised equilibrium corresponding to the $(A_G+I_n)e$-weighted maximum independent set\footnote[4]{Recall, for weights given by the vector $w=(w_1,...,w_n)$, where $w_k$ corresponds to the weight of the $k^{th}$ node, a \textit{$w-$weighted maximum independent set} is the maximal independent set $S$ maximizing $\alpha_w(G)=\sum_{k\in S} w_k$.} attains this value of the objective, where $e$ is a vector of size $|V|$ with each component as 1.
\end{enumerate} 
\end{enumerate}
\label{th:struc}
\end{theorem}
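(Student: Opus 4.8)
The plan is to reduce each of the four aggregate-level problems to a single-shot public goods game for each realized signal, and then to invoke the equilibrium characterization of Pandit and Kulkarni (Lemma~\ref{lem:parthe}). Two reductions drive everything. First, for a fixed policy $\pi$ the realized equilibrium is a function of the signal $\iota_S$ alone (individuals observe only $\iota_S$), and the objective is additive across signal realizations; hence the optimistic/pessimistic selection of equilibria can be made independently for each $\iota_S$, so the outer $\max$ (resp. $\min$) over $\NE$ commutes with the expectation: $\max_{x^*\in\NE}\mathbb{E}_{\iota,\iota_S}[\mathcal{O}]=\mathbb{E}_{\iota_S}\big[\max_{x^*\in\NE(\iota_S)}\mathbb{E}_{\iota\mid\iota_S}[\mathcal{O}]\big]$, and symmetrically for $\min$. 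Second, for the probability-of-safety objective the inner conditional expectation collapses the real-state benefit onto the belief-adjusted benefit, since $\mathbb{E}_{\iota\mid\iota_S}[b(\cdot;\iota)]=\mu_\pi(h\mid\iota_S)b(\cdot;h)+\mu_\pi(l\mid\iota_S)b(\cdot;l)=\tilde b(\cdot;\mu_\pi,\iota_S)$. Thus for each signal the government faces exactly the public goods game of \eqref{eq:epidemic game} with benefit $\tilde b$ and unilateral effort $e^*_{\iota_S}$, and it remains to extremize the single-game objective over $\NE(\iota_S)$.

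Next I would record the scale invariance of this single game. Because the best response is $x_k=\max\{0,\,e^*_{\iota_S}-\sum_{j\in N_k}x_j\}$, which is positively homogeneous of degree one in $(e^*_{\iota_S},x)$, the equilibrium set satisfies $\NE(\iota_S)=e^*_{\iota_S}\cdot\NE_1$, where $\NE_1$ is the graph-determined, compact equilibrium set for unit unilateral effort. This lets me factor $e^*_{\iota_S}$ out of every effort quantity. For Part~1, the unweighted aggregate effort is maximized over all equilibria by the specialized equilibrium of the maximum independent set, giving $\alpha(G)\,e^*_{\iota_S}$; pulling $\alpha(G)$ out of the expectation yields the claim. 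Part~2 is the dual statement of Lemma~\ref{lem:parthe}: distributed equilibria minimize aggregate cost, hence aggregate effort, and their normalized aggregate effort is the graph constant $m(G)$. For Part~3 I use that the best-response condition forces $\mathcal{E}_k(x^*)\ge e^*_{\iota_S}$ in every equilibrium, with equality whenever $x^*_k>0$; thus $\sum_k\tilde b(\mathcal{E}_k)\ge n\,\tilde b(e^*_{\iota_S})$, and a distributed equilibrium (where every $\mathcal{E}_k=e^*_{\iota_S}$) attains the bound, giving $\mathbb{E}_{\iota,\iota_S}[b(e^*_{\iota_S};\iota)]$.

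Part~4 is where the real work lies, and it turns on the shape parameter $\sigma_b$. I first note that in any equilibrium $\mathcal{E}_k(x^*)\in[e^*_{\iota_S},\,n\,e^*_{\iota_S}]$, so by monotonicity $0\le \tilde b(\mathcal{E}_k)-\tilde b(e^*_{\iota_S})\le \tilde b(n e^*_{\iota_S})-\tilde b(e^*_{\iota_S})=\sigma_b\,c\,e^*_{\iota_S}(n-1)$. For Part~4(a), as $\sigma_b\to0$ this upper bound vanishes uniformly over $\NE(\iota_S)$, so every equilibrium yields average benefit converging to $\tilde b(e^*_{\iota_S})$, i.e. $\mathbb{E}_{\iota,\iota_S}[b(e^*_{\iota_S};\iota)]$. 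For Part~4(b) I linearize: since $\tilde b$ is concave with $\tilde b'(e^*_{\iota_S})=c$, its tangent line $L(t)=\tilde b(e^*_{\iota_S})+c(t-e^*_{\iota_S})$ lies above $\tilde b$ on $[e^*_{\iota_S},n e^*_{\iota_S}]$, while $\sigma_b\to1$ pins $\tilde b(n e^*_{\iota_S})$ to $L(n e^*_{\iota_S})$; concavity then squeezes $\tilde b$ uniformly onto $L$ on the whole interval. Hence the total benefit converges uniformly over the compact $\NE_1$ to $\sum_k[\tilde b(e^*_{\iota_S})+c(\mathcal{E}_k-e^*_{\iota_S})]$, and maximizing reduces to maximizing $c\sum_k\mathcal{E}_k=c\,((A_G+I_n)e)^\top x$ --- the $w$-weighted aggregate effort with weights $w=(A_G+I_n)e$, i.e. $w_k=1+\deg(k)$. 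By Lemma~\ref{lem:parthe} this is attained by the specialized equilibrium of the $(A_G+I_n)e$-weighted maximum independent set, with value $\alpha_w(G)\,e^*_{\iota_S}$; dividing by $n$ gives $\tilde b(e^*_{\iota_S})+f(G)\,e^*_{\iota_S}$ with $f(G)=c\big(\alpha_w(G)/n-1\big)$, which is the stated expression under $\mathbb{E}_{\iota,\iota_S}$.

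The main obstacle is Part~4(b): unlike the other three cases, the optimistic benefit objective is a genuinely nonlinear combinatorial optimization over $\NE$, and it collapses to a weighted maximum-independent-set problem only in the limit. The delicate points are (i) establishing the uniform (not merely pointwise) convergence of $\tilde b$ to its tangent line as $\sigma_b\to1$ via the concavity squeeze, and (ii) transferring this to convergence of the maximizers, which needs compactness of $\NE_1$ together with uniform convergence of the objective so that $\max$ passes to the limit. Care is also needed to confirm that the emergent weights are exactly $(A_G+I_n)e$ --- the step linking the degree-weighting produced by $\sum_k\mathcal{E}_k=e^\top(I_n+A_G)x$ to the weighted-independent-set machinery of Lemma~\ref{lem:parthe}.
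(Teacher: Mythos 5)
Your proposal is correct and takes essentially the same approach as the paper: both reduce the problem, signal by signal, to the public goods game \eqref{eq:epidemic game} with benefit $\tilde b(\cdot;\mu_\pi,\iota_S)$, then invoke Lemma \ref{lem:parthe} (weighted specialized equilibria for parts 1 and 4(b), distributed-equilibrium cost minimality for part 2) together with the complementarity bound $\mathcal{E}_k(x^*)\ge e^*_{\iota_S}$, with equality at distributed equilibria, for part 3. Your explicit scale invariance of the equilibrium set and your concavity squeeze onto the tangent line in part 4(b) are self-contained re-derivations of facts the paper cites directly from \cite{parthe} (Lemma \ref{lem:parthe}, points 3--4), so the substance coincides.
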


Theorem \ref{th:struc} considers the two objectives of the government for each attitude (pessimistic or optimistic) in the four parts of the theorem. The main finding is that the value for each of these objectives can be written as a product of a term that depends only on the network and a term that depends only on the utility function of the agent. This decomposition helps separate out the problems of information and incentive design from network design, helping us surpass the complexities arising due to a heterogeneous network and the associated externalities. Since informative signals do not change the underlying structure of the network, the signals only affect the component consisting of the unilateral effort, thus effectively reducing the complicated objectives of the government to simpler objectives which are only dependent on the unilateral effort $e^*$. Note that the network dependent terms in the structural characterization are quite different in each of the four cases. However, the terms related to the utility function are similar, giving two main reductions of the objectives: $e^*(\mu)$ and $b(e^*(\mu))$. Every objective either has one of these as the network independent term, or is a positive linear combination of both. Also note that, unlike the first three parts in the theorem, the fourth part contains results only in certain limits of $\sigma_b$. 

These results are obtained by making a key use of the results by Pandit and Kulkarni \cite{parthe}, which study the performance of the different equilibria in a public goods game. Their relevant results are stated below.

\begin{lemma} 
\begin{enumerate}
    \item \cite[Theorem 3(a)]{parthe} Among all NE, specialized equilibrium corresponding to the $w$-weighted maximum independent set attains the maximum $w$-weighted aggregate effort $\sum_{k\in V}w_k\cdot x_k^*$.
    \item \cite[Theorem 3(c)]{parthe} Whenever a distributed equilibrium exists, it leads to the minimum aggregate cost (defined as $c\cdot\sum_{k\in V} x^*_k$) among all \NE. Moreover, all distributed equilibria lead to the same aggregate cost.
    \item \cite[Equations 11, 12]{parthe} Bounds on the benefit received by an individual in equilibrium \begin{align*}
        b(e^*) &\leq b(\mathcal{E}_k(x^*)) \leq b(ne^*) \\
        b(e^*) + c\sigma_b\cdot(\mathcal{E}_k(x^*) - e^*) &\leq b(\mathcal{E}_k(x^*)) \leq b(e^*) + c\cdot(\mathcal{E}_k(x^*) - e^*)
    \end{align*} where $\sigma_b := \frac{b(ne^*)- b(e^*)}{ce^*(n-1)}$.
    \item \cite[Equations 13, 14]{parthe} Benefit received by an individual in equilibrium in limiting cases of $\sigma_b$  \begin{align*}
        \lim_{\sigma_b\to 0} b(\mathcal{E}_k(x^*)) &= b(e^*) \\
        \lim_{\sigma_b\to 1} b(\mathcal{E}_k(x^*)) &= b(e^*) + c\cdot(\mathcal{E}_k(x) - e^*).
    \end{align*} 
\end{enumerate} \label{lem:parthe}
\end{lemma}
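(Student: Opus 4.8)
The plan is to treat the four parts in two groups: parts 3 and 4 (bounds on the equilibrium benefit) follow from the first-order characterization of best responses together with the concavity of $b$, while parts 1 and 2 (the extremal aggregate quantities) I would obtain from linear-programming and complementarity arguments built on the same characterization. The common backbone is the observation that, by the best-response rule $x_k = \max\{0,\, e^* - \sum_{j\in N_k} x_j\}$, every Nash equilibrium $x^*$ satisfies, for each $k\in V$, the three facts $0 \le x_k^* \le e^*$, the domination inequality $\mathcal{E}_k(x^*) \ge e^*$, and the complementarity relation $x_k^*\,(\mathcal{E}_k(x^*) - e^*) = 0$. I would record these once and reuse them throughout.

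For part 3, the lower bound $b(e^*) \le b(\mathcal{E}_k(x^*))$ is immediate from $\mathcal{E}_k(x^*) \ge e^*$ and monotonicity of $b$. For the upper bound, note that $\mathcal{E}_k(x^*) = x_k^* + \sum_{j\in N_k} x_j^*$ is a sum of at most $n$ terms each bounded by $e^*$, so $\mathcal{E}_k(x^*) \le ne^*$ and hence $b(\mathcal{E}_k(x^*)) \le b(ne^*)$. The two linear bounds come purely from concavity on $[e^*, ne^*]$: the tangent line to $b$ at $e^*$, of slope $b'(e^*) = c$, lies above the graph and gives $b(\mathcal{E}_k) \le b(e^*) + c(\mathcal{E}_k - e^*)$, while the secant joining $(e^*, b(e^*))$ and $(ne^*, b(ne^*))$, of slope exactly $c\sigma_b$, lies below the graph and gives $b(e^*) + c\sigma_b(\mathcal{E}_k - e^*) \le b(\mathcal{E}_k)$. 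Part 4 is then the squeeze theorem applied to these: as $\sigma_b \to 0$ one has $b(ne^*)\to b(e^*)$, pinching $b(\mathcal{E}_k)$ to $b(e^*)$; as $\sigma_b \to 1$ the secant lower bound rises to the tangent upper bound, pinching $b(\mathcal{E}_k)$ to $b(e^*) + c(\mathcal{E}_k - e^*)$.

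For part 2, I would use linear-programming duality. Writing $\mathbf{1}$ for the all-ones vector and $A_G$ for the adjacency matrix, the domination inequalities read $(I + A_G)x \ge e^*\mathbf{1}$, so every Nash equilibrium is feasible for the program $\min\{\mathbf{1}^\top x : (I+A_G)x \ge e^*\mathbf{1},\ x \ge 0\}$, whose objective is the positive multiple $\tfrac{1}{c}$ of the aggregate cost. A distributed equilibrium $x$ meets all these constraints with equality, $(I+A_G)x = e^*\mathbf{1}$ and $x > 0$. Taking $y = x/e^*$ yields a dual-feasible point, since $(I+A_G)y = \mathbf{1}\le\mathbf{1}$ and $y\ge 0$, whose complementary slackness with $x$ is exact, certifying that the distributed equilibrium is a minimizer; equality of the objectives $\mathbf{1}^\top x = e^*\mathbf{1}^\top y$ confirms optimality. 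As every distributed equilibrium is an optimal solution of the same LP, they all share the minimal aggregate cost.

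Part 1 is the main obstacle. The lower bound is easy: by the Bramoull{\'e}--Kranton correspondence the specialized equilibrium on the $w$-weighted maximum independent set $S$ is a Nash equilibrium with $x_k^* = e^*$ on $S$ and $0$ elsewhere, giving weighted effort $e^*\alpha_w(G)$. The difficulty is the reverse inequality, that no Nash equilibrium exceeds this. Unlike part 2, the relevant extremum is now a maximum, and the constraint that actually caps the weighted effort is complementarity, so the feasible set is the (generally non-convex) solution set of a linear complementarity problem rather than a polyhedron, and a direct LP bound is unavailable. My plan is to pass to the active-set reformulation: any equilibrium is determined by its active set $A$, on which the induced subgraph is in a distributed equilibrium, so the weighted effort equals $e^*\,w_A^\top M_A^{-1}\mathbf{1}$ with $M_A$ the principal submatrix of $I+A_G$ indexed by $A$; one must then show this is maximized when $A$ is independent (hence a maximal independent set, where $M_A = I$), via an exchange argument that removes an intra-$A$ edge without decreasing $w^\top x^*$ while preserving the equilibrium conditions on the remaining nodes. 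Controlling the cascade such a local move induces on neighbours' domination constraints is the delicate step, and it is exactly the combinatorial content of \cite[Theorem 3]{parthe}, which I would reconstruct to complete the upper bound.
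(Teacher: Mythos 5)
First, a point of reference: the paper contains no proof of Lemma \ref{lem:parthe} at all --- all four parts are quoted from \cite{parthe} and used as imported facts, so your proposal is being measured against the source's arguments rather than anything proved in this paper. Against that benchmark, your parts 2--4 are correct and self-contained. For parts 3 and 4, the facts you extract from the best response --- $0\le x_k^*\le e^*$, $\mathcal{E}_k(x^*)\ge e^*$, and complementarity, i.e.\ exactly the LCP \eqref{eq:LCP} --- give $\mathcal{E}_k(x^*)\in[e^*,ne^*]$; since $b'(e^*)=c$ by \eqref{eq:net_eff}, the tangent-above-graph and secant-below-graph inequalities for the concave $b$ on $[e^*,ne^*]$ are precisely the two linear bounds (the secant slope is $c\sigma_b$ by definition), and part 4 is the squeeze you describe. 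For part 2, the self-dual LP certificate is clean: a distributed equilibrium solves $(A_G+I_n)x=e^*\mathbf{1}$ with $x>0$, so $y=x/e^*$ is dual feasible (using symmetry of $A_G+I_n$) with dual objective equal to the primal cost of $x$; weak duality then bounds the cost of every Nash equilibrium, all of which are primal feasible, from below, and simultaneously shows all distributed equilibria attain the common LP value.

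Part 1, however, is a genuine gap, and you say so yourself: the substantive half of the claim is the upper bound, that no Nash equilibrium exceeds $e^*\alpha_w(G)$ in $w$-weighted effort (the lower bound via the Bramoull{\'e}--Kranton specialized equilibrium is immediate), and your sketch does not establish it. Concretely: (i) the active-set value formula $e^*\,w_A^\top M_A^{-1}\mathbf{1}$ presupposes that the principal submatrix $M_A$ of $A_G+I_n$ is invertible, which fails precisely when two adjacent support nodes share the same closed neighbourhood --- a degeneracy this paper itself must work around in its proof of Theorem \ref{th:struc}, part 2; (ii) ``removing an intra-$A$ edge'' is not an admissible move, since the graph $G$ is fixed; the exchange must happen in equilibrium space, and any local reallocation of effort disturbs the domination constraints $\mathcal{E}_j(x)\ge e^*$ of nodes outside the support, triggering cascades for which you offer no monotonicity guarantee that $w^\top x$ weakly increases; and (iii) as you yourself note, the binding restriction is the nonconvex complementarity condition, so no convex relaxation can close the bound --- dropping complementarity leaves a polyhedron containing $x=e^*\mathbf{1}$, whose weighted effort $e^*\sum_k w_k$ far exceeds $e^*\alpha_w(G)$. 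What you defer to ``reconstructing'' \cite[Theorem 3(a)]{parthe} is therefore the entire content of part 1: the proposal proves parts 2--4 but asserts, rather than proves, part 1.
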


While the detailed proof of how Theorem \ref{th:struc} follows from Lemma \ref{lem:parthe} can be found in the appendix, an intuitive understanding can be obtained by considering part 1 of the theorem. Pandit and Kulkarni show that the specialized equilibrium corresponding to the maximum independent set maximizes the aggregate effort among all Nash equilibria. The aggregate effort in this case is given by $\alpha(G)\cdot e^*$, where $\alpha(G)$ is size of the maximum independent set of the graph and $e^*$ is the unilateral effort \eqref{eq:net_eff}. Using this and the fact that for every belief distribution $\mu_\pi$ and for every signal $\iota_S$, a distinct public goods game is played at the level of the individuals, part 1 of Theorem \ref{th:struc} can be obtained.

\subsection{Optimal signaling policy}
We now present our results related to the optimal signaling policy. Let $\mathcal{A}(Q)$ be the coefficient of absolute risk aversion of a function $Q:[0,\infty)\rightarrow [0,\infty)$, defined as \cite{arrow1965aspects, pratt1978risk} $$\mathcal{A}(Q):= -\frac{Q''}{Q'}.$$
Let $\mathcal{P}(Q)$ be the coefficient of absolute prudence of a function $Q$ defined as \cite{eeckhoudt1995risk, maggi2006risk}
$$\mathcal{P}(Q) := -\frac{Q'''}{Q''}.$$
Let
\begin{align*}
    \mathcal{R}(b;\mu)&:=\mathcal{A}(\Delta b)-\frac{\mathcal{P}(b)}{2} \\
    \tilde{\mathcal{R}}(b;\mu) &:= \mathcal{A}(\Delta b)-\frac{\mathcal{P}(b)+\mathcal{A}(b)}{2},
\end{align*}

Recall the full disclosure and no disclosure policies as defined in \eqref{eq:fulldisc}, \eqref{eq:nodisc}.

\begin{theorem} (Full disclosure and no disclosure as the optimal signaling policy)

\begin{enumerate}
    \item (Optimistic government maximizing aggregate effort)
    
    Consider an optimistic government with objective given by \eqref{eq:obj1}. In the Stackelberg equilibrium (cf. Def \ref{def:OSE}),
    
     If $\mathcal{R}(b;\mu)$ is positive for all induced beliefs $\mu$, then no information disclosure is optimal. If $\mathcal{R}(b;\mu)$ is negative $\forall\mu$, then full information disclosure is optimal.
    \item (Pessimistic government maximizing aggregate effort)
    
    Consider a pessimistic government with objective given by \eqref{eq:obj1}. In the Stackelberg equilibrium (cf. Def \ref{def:PSE}), 
    
    If $\mathcal{R}(b;\mu)$ is positive for all induced beliefs $\mu$, then no information disclosure is optimal. If $\mathcal{R}(b;\mu)$ is negative $\forall\mu$, then full information disclosure is optimal. 
    \item (Pessimistic government maximizing probability of random individual being safe) 
    
    Consider a pessimistic government with objective given by \eqref{eq:obj2}. In the Stackelberg equilibrium (cf. Def \ref{def:PSE}),
    
     If $\tilde{\mathcal{R}}(b;\mu)$ is positive for all induced beliefs $\mu$, then no information disclosure is optimal. If $\tilde{\mathcal{R}}(b;\mu)$ is negative $\forall\mu$, then full information disclosure is optimal.
    \item (Optimistic government maximizing probability of random individual being safe) 
    
    Consider an optimistic government with objective given by \eqref{eq:obj2}. In the Stackelberg equilibrium (cf. Def \ref{def:OSE}), 
    
    \begin{enumerate}
        \item As $\sigma_b\to 0$, when $\tilde{\mathcal{R}}(b;\mu)>0$, no information disclosure is the optimal policy. When $\tilde{\mathcal{R}}(b;\mu)<0$, full information disclosure is optimal.
        \item As $\sigma_b\to 1$, when $\tilde{\mathcal{R}}(b;\mu)>0$, no information disclosure is the optimal policy. When $\mathcal{R}(b;\mu)<0$, full information disclosure is optimal.
    \end{enumerate}
\end{enumerate}
\label{th:policy}
\end{theorem}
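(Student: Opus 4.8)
The plan is to combine the structural decomposition of Theorem~\ref{th:struc} with the concavity/convexity criterion of Kamenica and Gentzkow \cite{persuasion}. The decomposition writes each objective, up to a multiplicative graph constant that is \emph{unaffected} by the signal, as a double expectation $\mathbb{E}_{\iota,\iota_S}$ of a belief-dependent quantity. First I would collapse this double expectation into a single expectation over signals: conditioning on $\iota_S$ and taking the inner expectation over the true state $\iota$ given the signal turns the term into the Bayes-plausible posterior $\mu := \mu_\pi(h\mid\iota_S)$. This identifies, for each case, a scalar \emph{value function} $v(\mu)$ of the induced belief: $v(\mu)=e^*(\mu)$ for the two aggregate-effort objectives (parts 1 and 2), $v(\mu)=B(e^*(\mu);\mu)$ with $B(x;\mu):=\mu\,b(x;h)+(1-\mu)b(x;l)$ for the pessimistic probability-of-safety objective and the $\sigma_b\to 0$ limit (parts 3 and 4a), and $v(\mu)=B(e^*(\mu);\mu)+f(G)\,e^*(\mu)$ for the $\sigma_b\to 1$ limit (part 4b). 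Here $e^*(\mu)$ is the unilateral effort, defined implicitly by $\partial_x B(e^*;\mu)=c$.

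Once each objective is in the form $\mathbb{E}_{\iota_S}[v(\mu_{\iota_S})]$ over a Bayes-plausible distribution of posteriors, the Kamenica--Gentzkow concave-closure argument applies verbatim: if $v$ is concave on the range of induced beliefs then no disclosure (posterior $\equiv$ prior) is optimal, and if $v$ is convex then full disclosure (degenerate posteriors) is optimal. The remaining task is therefore to determine the sign of $v''(\mu)$ in each case and match it to the stated conditions. The core computation is the curvature of $e^*(\mu)$: differentiating $\partial_x B(e^*;\mu)=c$ implicitly gives $\tfrac{de^*}{d\mu}=-\tfrac{\Delta b'(e^*)}{B''(e^*;\mu)}$, and differentiating once more and substituting back yields $\tfrac{d^2 e^*}{d\mu^2}=-\tfrac{2\,(\Delta b')^2}{(B'')^2}\,\mathcal{R}(b;\mu)$, after rewriting $\Delta b''=-\mathcal{A}(\Delta b)\,\Delta b'$ and $B'''=-\mathcal{P}(b)\,B''$. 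Since $\Delta b':=b'(e^*;h)-b'(e^*;l)>0$ (positive because Assumption~\ref{asu:1} forces $de^*/d\mu>0$) and $B''<0$ by concavity, we get $\mathrm{sign}(e^{*\prime\prime})=-\mathrm{sign}(\mathcal{R}(b;\mu))$, which settles parts 1 and 2.

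For the safety objective I would compute $v(\mu)=B(e^*(\mu);\mu)$ using the envelope identity $\partial_x B(e^*;\mu)=c$, obtaining $v'(\mu)=c\,\tfrac{de^*}{d\mu}+\Delta b(e^*)$ and hence $v''(\mu)=c\,\tfrac{d^2 e^*}{d\mu^2}-\tfrac{(\Delta b')^2}{B''}$. Substituting $B''(e^*;\mu)=-c\,\mathcal{A}(b)$ and the expression for $e^{*\prime\prime}$ collapses the combination $\mathcal{A}(\Delta b)-\tfrac12\mathcal{P}(b)-\tfrac12\mathcal{A}(b)$ exactly into $\tilde{\mathcal{R}}(b;\mu)$, giving $\mathrm{sign}(v'')=-\mathrm{sign}(\tilde{\mathcal{R}}(b;\mu))$ and settling parts 3 and 4a. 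I would also remark that the hypothesis ``for all induced beliefs $\mu$'' is precisely what upgrades these pointwise curvature signs to global concavity/convexity on the relevant interval, as the concave-closure argument requires.

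The hard part will be part 4b, where $v(\mu)=B(e^*(\mu);\mu)+f(G)\,e^*(\mu)$ mixes the two curvatures, so that $v''(\mu)=-\tfrac{2(\Delta b')^2}{(B'')^2}\big(c\,\tilde{\mathcal{R}}(b;\mu)+f(G)\,\mathcal{R}(b;\mu)\big)$ and the sign is no longer controlled by a single coefficient. The resolution I would use is the elementary identity $\mathcal{R}(b;\mu)-\tilde{\mathcal{R}}(b;\mu)=\tfrac12\mathcal{A}(b)\ge 0$ (valid since $b$ is increasing and concave), together with $f(G)\ge 0$: if $\tilde{\mathcal{R}}>0$ then $\mathcal{R}\ge\tilde{\mathcal{R}}>0$, so the bracketed term is positive and $v$ is concave, giving no disclosure; if $\mathcal{R}<0$ then $\tilde{\mathcal{R}}\le\mathcal{R}<0$, so the bracket is negative and $v$ is convex, giving full disclosure. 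This ordering of the two coefficients is exactly what explains the asymmetry between the two sufficient conditions in part 4b, and the whole argument then reduces to the bookkeeping of the second-derivative computation and its re-expression through the risk-aversion and prudence coefficients.
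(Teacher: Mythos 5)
Your proposal is correct and follows essentially the same route as the paper's own proof: reduce each objective via Theorem \ref{th:struc} to $e^*(\mu)$, $b(e^*(\mu))$, or a positive linear combination of the two, apply the Kamenica--Gentzkow concavification criterion, determine curvature by implicit differentiation of the first-order condition $\mu b'(e^*;h)+(1-\mu)b'(e^*;l)=c$, and handle part 4(b) via the ordering $\mathcal{R}(b;\mu)-\tilde{\mathcal{R}}(b;\mu)=\tfrac{1}{2}\mathcal{A}(b)>0$. Your explicit closed forms for ${e^*}''(\mu)$ and $v''(\mu)$ in terms of $\mathcal{R}$ and $\tilde{\mathcal{R}}$ agree with (and fill in slightly more detail than) the paper's equations \eqref{eq:x'}--\eqref{eq:b(x*)}.
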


Theorem \ref{th:policy} states that the optimal information disclosure policy is dependent on the sign of the quantities $\mathcal{R}(b;\mu)$ and $\tilde{\mathcal{R}}(b;\mu)$, which depend on the risk aversion of the differential probabilities and probabilities of being safe i.e., $\mathcal{A}(\Delta b)$ and $\mathcal{A}(b)$ and the prudence $\mathcal{P}(b)$. When these signs do not hold uniformly in one direction, a policy in between full disclosure and no disclosure may be optimal.

The results in Theorem \ref{th:policy} are obtained using our structural decomposition results in Section \ref{subsec:res_struc} (which show that the government objectives reduce to $e^*$ and $b(e^*)$) and the concavification arguments discussed by Kamenica and Gentzkow \cite{persuasion} (which show that when government objectives are strictly concave (convex) with respect to public beliefs $\mu$, no (full) disclosure is the optimal policy). The detailed proofs can be found in the appendix \ref{sec:Analysis}.

Furthermore, we can see that when full information is the optimal signaling policy for maximizing $e^*$, it is also the optimal policy for maximizing the other two objectives ($b(e^*)$ and positive linear combination of $e^*$ and $b(e^*)$). When no information disclosure is the optimal policy for maximizing $b(e^*)$, it is also the optimal policy for maximizing the other two objectives. This follows from the risk aversion conditions for optimizing $e^*$ and $b(e^*)$ (Theorem \ref{th:policy}) and the fact that $\mathcal{A}(b)>0$ (since $b$ is increasing and concave).

Finally, as a corollary to Theorem \ref{th:policy}, we obtain sufficient conditions, which are independent of the belief of individuals $\mu$ and only dependent on the functions $b(x;h)$ and $b(x;l)$, under which full disclosure and no disclosure is optimal for the aggregate effort objective of the government (for both the attitudes). 

\begin{corollary}
(Sufficient conditions for optimal information disclosure policy, independent of $\mu$, for the aggregate effort objective of the government) 

Let $
     Y(b(\cdot;h), b(\cdot;l)):=2\frac{(\Delta b)''}{(\Delta b)'} (\Delta b)'' - (\Delta b)'''$ and $ Z(b(\cdot;h), b(\cdot;l)):=2\frac{(\Delta b)''}{(\Delta b)'} b''(x;l) - b'''(x;l)$.
When $Y(b(\cdot;h), b(\cdot;l))<0$ and $Z(b(\cdot;h), b(\cdot;l))>0$, no information is the optimal policy. On the other hand, when $Y(b(\cdot;h), b(\cdot;l))>0$ and $Z(b(\cdot;h), b(\cdot;l))<0$, full information is the optimal policy. \label{cor:suff_cond}
\end{corollary}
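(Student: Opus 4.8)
The plan is to reduce the statement, for both attitudes, to the curvature of the unilateral effort $e^\ast$ viewed as a function of the belief $\mu:=\mu(h)$, and then to find conditions on the primitives $b(\cdot;h),b(\cdot;l)$ that pin down that curvature uniformly in $\mu$. By parts (1)--(2) of Theorem \ref{th:struc} the aggregate-effort objective equals a graph-dependent constant times $\mathbb{E}_{\iota,\iota_S}[e^\ast]$, so the persuasion problem is governed entirely by the map $\mu\mapsto e^\ast(\mu)$. The concavification argument behind Theorem \ref{th:policy}(1)--(2) then says: no disclosure is optimal when $e^\ast(\mu)$ is concave on $[0,1]$ and full disclosure when it is convex; indeed, the sign of $\mathcal{R}(b;\mu)$ used there is exactly a representation of $-\,d^2e^\ast/d\mu^2$. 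Thus it suffices to determine the sign of $d^2e^\ast/d\mu^2$ by a test that never mentions $\mu$.

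First I would exploit the first-order condition $\tilde b'(e^\ast;\mu)=c$, i.e. $\mu\,b'(e^\ast;h)+(1-\mu)\,b'(e^\ast;l)=c$, but read it the other way around: solving for the belief yields the effort-to-belief map
\[
\mu(e)=\frac{b'(e;l)-c}{b'(e;l)-b'(e;h)}=\frac{N(e)}{D(e)},\qquad N:=b'(\cdot;l)-c,\quad D:=(\Delta b)',
\]
with $\Delta b:=b(\cdot;l)-b(\cdot;h)$. Assumption \ref{asu:1} (that $e^\ast$ increases with $\mu$) makes $\mu(\cdot)$ strictly increasing, so $e^\ast$ is its honest inverse and the inverse-function identity $d^2e^\ast/d\mu^2=-\mu''(e)/(\mu'(e))^3$ gives $\operatorname{sign}(d^2e^\ast/d\mu^2)=-\operatorname{sign}(\mu''(e))$.

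The technical heart is computing $\mu''(e)$. Differentiating the quotient $N/D$ twice and substituting $N''=2\tfrac{D'}{D}N'-Z$ and $D''=2\tfrac{D'^2}{D}-Y$ -- which is just a rewriting of the definitions, since $N'=b''(\cdot;l)$, $N''=b'''(\cdot;l)$, $(\Delta b)'=D$, $(\Delta b)''=D'$ and $(\Delta b)'''=D''$ -- I expect a clean cancellation down to
\[
\mu''(e)=\frac{N\,Y-D\,Z}{D^2}.
\]
Securing this cancellation (the two first-order cross terms and the two $D'^2$ terms must cancel in pairs) is the one delicate step; everything else is sign bookkeeping.

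Finally I would fix the signs of $N$ and $D$. Since $c=\mu\,b'(e^\ast;h)+(1-\mu)\,b'(e^\ast;l)$ is a convex combination and Assumption \ref{asu:1} forces $b'(\cdot;h)>b'(\cdot;l)$ on the relevant range, we get $b'(e^\ast;l)\le c\le b'(e^\ast;h)$, hence $N\le 0$ and $D=b'(\cdot;l)-b'(\cdot;h)<0$; Assumption \ref{asu:2} keeps this range a nondegenerate interval of positive efforts so that $D\neq 0$. With $N<0$ and $D<0$ one has $\operatorname{sign}(\mu''(e))=\operatorname{sign}(NY-DZ)$, so $Y<0,\,Z>0$ force $\mu''>0$ (hence $e^\ast(\mu)$ concave and no disclosure optimal) while $Y>0,\,Z<0$ force $\mu''<0$ (hence $e^\ast(\mu)$ convex and full disclosure optimal). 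Because these inequalities are imposed for all efforts in the range and $Y,Z$ carry no dependence on $\mu$, the concavity or convexity holds globally on $[0,1]$, yielding exactly the belief-independent sufficient conditions claimed.
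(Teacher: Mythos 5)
Your proposal is correct, and it reaches the corollary by a genuinely different core computation than the paper. Both arguments share the same scaffolding: Theorem~\ref{th:struc} (parts 1--2) reduces the aggregate-effort objectives to a positive graph constant times $\mathbb{E}_{\iota,\iota_S}[e^*]$, and the concavification argument turns the problem into signing $d^2e^*/d\mu^2$ uniformly in $\mu$. From there the paper differentiates the first-order condition implicitly in $\mu$ (equations \eqref{eq:x'}--\eqref{eq:x''}), encodes curvature in the sign of $\mathcal{R}(b;\mu)$ via \eqref{eq:x*}, and then, in its proof of the corollary, unpacks the sign of $\mathcal{R}(b;\mu)$ as the linear-in-$\mu$ comparison of $\mu Y$ with $Z$. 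You instead invert the monotone best-response map (legitimate by Assumption~\ref{asu:1} and interior by Assumption~\ref{asu:2}) to write the belief as an explicit rational function of effort, $\mu(e)=N(e)/D(e)$ with $N=b'(\cdot\,;l)-c$ and $D=(\Delta b)'$, compute $\mu''(e)$, and use $\operatorname{sign}(e^{*\prime\prime})=-\operatorname{sign}(\mu'')$ together with $N\le 0$, $D<0$. I verified the cancellation you flagged as the delicate step: in the numerator of $(N/D)''$, namely $N''D^2-ND''D-2N'DD'+2N(D')^2$, substituting $N''=2(D'/D)N'-Z$ and $D''=2(D')^2/D-Y$ makes the $2N'DD'$ terms and the $2N(D')^2$ terms cancel in pairs, leaving exactly $\mu''=(NY-DZ)/D^2$; moreover strictness of the curvature survives the boundary point $\mu=0$ (where $N=0$) because $-DZ$ is strictly signed there. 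Your derivation is also consistent with the paper's criterion, since $\mu=N/D$ with $D<0$ makes $NY>DZ$ equivalent to $\mu Y<Z$. What your route buys: belief-independence of the test is manifest from the outset (the map $\mu(e)$ contains no $\mu$), you obtain an exact closed form for the curvature rather than only a sign characterization, and you bypass the paper's intermediate display \eqref{eq:Y,Z}, which as printed has the equivalence reversed (it should characterize $\mathcal{R}(b;\mu)<0$, not $>0$; the paper's stated conclusions, and your computation, correspond to the corrected version). What the paper's route buys is the interpretation through risk aversion and prudence and direct reuse of Theorem~\ref{th:policy}, so the corollary there is a two-line afterthought once \eqref{eq:x*} is in place.
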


The sufficient conditions in Corollary \ref{cor:suff_cond} are useful since they allow us to talk about the optimal information disclosure by looking at the nature of the benefit functions alone, without requiring details about individuals' beliefs.

Next, we look at cases where two special intermediate policies, exaggeration and downplay, are the optimal disclosure policies. The exaggeration signaling policy is one in which the high infection state is always signaled correctly, but the low infection level is also signaled as high with some non-negative probability i.e.,
\begin{align}
    \text{Exaggeration:} \hspace{2em} \pi(\iota_S=h|\iota=h)=1, \hspace{1em} \pi(\iota_S=h|\iota=l)>0. \label{eq:exag}
\end{align}
On the other hand, the downplaying policy is one in which the low infection state is always correctly signaled, but the high infection state is signaled as low with some non-negative probability i.e.,
\begin{align}
    \text{Downplay:} \hspace{2em} \pi(\iota_S=l|\iota=l)=1, \hspace{1em} \pi(\iota_S=l|\iota=h)>0. \label{eq:down}
\end{align}

\begin{theorem} (Exaggeration and downplay as the optimal signaling policy)

\begin{enumerate}
    \item (Optimistic government maximizing aggregate effort)
    
    Consider an optimistic government with objective given by \eqref{eq:obj1}. In the Stackelberg equilibrium (cf. Def \ref{def:OSE}),
     
     If $\exists $ $\mu_e$ such that $\forall \mu< \mu_e, \mathcal{R}(b;\mu)<0$ and $\forall \mu > \mu_e, \mathcal{R}(b;\mu)>0$, then exaggeration is the optimal policy. On the other hand, if $\exists$ $ \mu_d$ such that $\forall \mu< \mu_d, \mathcal{R}(b;\mu)>0$ and $\forall \mu > \mu_d, \mathcal{R}(b;\mu)<0$, then downplaying the infection level is optimal.
    \item (Pessimistic government maximizing aggregate effort)
    
    Consider an pessimistic government with objective given by \eqref{eq:obj1}. In the Stackelberg equilibrium (cf. Def \ref{def:PSE}),
    
    If $\exists $ $ \mu_e$ such that $\forall \mu< \mu_e, \mathcal{R}(b;\mu)<0$ and $\forall \mu > \mu_e, \mathcal{R}(b;\mu)>0$, then exaggeration is the optimal policy. On the other hand, if $\exists $ $ \mu_d$ such that $\forall \mu< \mu_d, \mathcal{R}(b;\mu)>0$ and $\forall \mu > \mu_d, \mathcal{R}(b;\mu)<0$, then downplaying the infection level is optimal.
    \item (Pessimistic government maximizing probability of random individual being safe) 
    
    Consider an pessimistic government with objective given by \eqref{eq:obj2}. In the Stackelberg equilibrium (cf. Def \ref{def:PSE}),
    
     If $\exists$ $ \mu_e$ such that $\forall \mu< \mu_e, \tilde{\mathcal{R}}(b;\mu)<0$ and $\forall \mu > \mu_e, \tilde{\mathcal{R}}(b;\mu)>0$, then exaggeration is the optimal policy. On the other hand, if $\exists $ $\mu_d$ such that $\forall \mu< \mu_d, \tilde{\mathcal{R}}(b;\mu)>0$ and $\forall \mu > \mu_d, \tilde{\mathcal{R}}(b;\mu)<0$, then downplaying the infection level is optimal.
    \item (Optimistic government maximizing probability of random individual being safe) 
    
    Consider an optimistic government with objective given by \eqref{eq:obj2}. In the Stackelberg equilibrium (cf. Def \ref{def:OSE}),
    \begin{enumerate}
        \item As $\sigma_b\to 0$, if $\exists $ $\mu_e$ such that $\forall \mu< \mu_e, \tilde{\mathcal{R}}(b;\mu)<0$ and $\forall \mu > \mu_e, \tilde{\mathcal{R}}(b;\mu)>0$, then exaggeration is the optimal policy. On the other hand, if $\exists $ $ \mu_d$ such that $\forall \mu< \mu_d, \tilde{\mathcal{R}}(b;\mu)>0$ and $\forall \mu > \mu_d, \tilde{\mathcal{R}}(b;\mu)<0$, then downplaying the infection level is optimal.
        \item As $\sigma_b\to 1$, if $\exists $ $\mu_e$ such that $\forall \mu< \mu_e, \mathcal{R}(b;\mu)<0$ and $\forall \mu > \mu_e, \tilde{\mathcal{R}}(b;\mu)>0$, then exaggeration is the optimal policy. On the other hand, if $\exists $ $\mu_d$ such that $\forall \mu< \mu_d, \tilde{\mathcal{R}}(b;\mu)>0$ and $\forall \mu > \mu_d, \mathcal{R}(b;\mu)<0$, then downplaying the infection level is optimal.
    \end{enumerate}
\end{enumerate}
\label{th:partial policy}
\end{theorem}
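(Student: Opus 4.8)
The plan is to build directly on the two ingredients already in place: the structural decomposition of Theorem~\ref{th:struc}, which collapses each government objective into the expectation $\mathbb{E}_{\iota,\iota_S}[V(\mu_\pi)]$ of a scalar function $V$ of the induced posterior belief $\mu$, and the concavification principle of Kamenica and Gentzkow, under which the optimal value equals the concave closure $C_V(\mu_0)$ and the optimal policy is the Bayes-plausible splitting of $\mu_0$ that attains it. For parts~1 and~2 the relevant function is $V(\mu)=e^*(\mu)$; for part~3 and part~4(a) it is $V(\mu)=b(e^*(\mu))$; and for part~4(b) it is the positive combination $V(\mu)=b(e^*(\mu))+f(G)\,e^*(\mu)$ coming from Theorem~\ref{th:struc}(4b). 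I would first recall, from the computations underlying Theorem~\ref{th:policy}, the curvature dictionary $\operatorname{sign} V''(\mu)=-\operatorname{sign}\mathcal{R}(b;\mu)$ when $V=e^*$ and $\operatorname{sign} V''(\mu)=-\operatorname{sign}\tilde{\mathcal{R}}(b;\mu)$ when $V=b(e^*)$, so that the sign hypotheses in the theorem translate exactly into statements about where $V$ is convex and where it is concave.

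The heart of the argument is then a single convex-analysis fact about the shape of the concave closure. Under the exaggeration hypothesis $V$ is convex on $[0,\mu_e]$ and concave on $[\mu_e,1]$; I would show that the concave closure of such a convex-then-concave function consists of one line segment issuing from the left endpoint $(0,V(0))$ and tangent to the graph of $V$ at a point $\mu^\ast\in[\mu_e,1]$, followed by $V$ itself on $[\mu^\ast,1]$. Consequently, for any prior $\mu_0<\mu^\ast$ the value $C_V(\mu_0)$ is attained by the two-point posterior distribution supported on $\{0,\mu^\ast\}$ with weights fixed by Bayes plausibility $(1-\lambda)\mu^\ast=\mu_0$. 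A posterior equal to $0$ is induced precisely when the ``low'' signal is sent only in the true low state, i.e. $\pi(\iota_S=h\mid\iota=h)=1$, while the mass at $\mu^\ast>\mu_0$ forces $\pi(\iota_S=h\mid\iota=l)>0$; these are exactly conditions \eqref{eq:exag}, so the optimal policy is exaggeration. The downplay half is the mirror image: the downplay hypothesis makes $V$ concave-then-convex, its concave closure equals $V$ up to a tangent point $\mu^{\ast\ast}$ and is the tangent chord running into the right endpoint $(1,V(1))$ thereafter, so that $C_V(\mu_0)$ for $\mu_0>\mu^{\ast\ast}$ is attained on the support $\{\mu^{\ast\ast},1\}$; a posterior at $1$ means $\pi(\iota_S=l\mid\iota=l)=1$ and the interior mass forces $\pi(\iota_S=l\mid\iota=h)>0$, which is \eqref{eq:down}. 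Parts~1, 2, 3 and 4(a) then follow immediately upon substituting the corresponding $V$ and the corresponding curvature quantity ($\mathcal{R}$ for $e^*$, $\tilde{\mathcal{R}}$ for $b(e^*)$).

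For part~4(b), where $V=b(e^*)+f(G)e^*$ with $f(G)>0$, I would reduce the mixed hypotheses to the same convex-then-concave (resp. concave-then-convex) shape using the identity $\tilde{\mathcal{R}}(b;\mu)=\mathcal{R}(b;\mu)-\tfrac12\mathcal{A}(b)$ together with $\mathcal{A}(b)>0$, which yields the ordering $\tilde{\mathcal{R}}<\mathcal{R}$. For the exaggeration condition of 4(b), $\mathcal{R}<0$ on $[0,\mu_e]$ already forces $\tilde{\mathcal{R}}<0$ there, so both $e^*$ and $b(e^*)$ are convex and hence so is $V$; symmetrically $\tilde{\mathcal{R}}>0$ on $[\mu_e,1]$ forces $\mathcal{R}>0$, making both terms and hence $V$ concave there. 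Thus $V$ is again convex-then-concave and the closure argument applies verbatim; the downplay condition is handled dually.

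I expect the main obstacle to be the convex-analysis shape lemma — rigorously proving that a function with a single convex-to-concave transition has a concave closure of the claimed ``tangent-chord-then-$V$'' form, including existence and uniqueness of the tangent point $\mu^\ast$ and the verification that the chord dominates $V$ throughout the convex region — together with the bookkeeping that identifies the two-point posterior support with the policy parametrizations \eqref{eq:exag} and \eqref{eq:down}. A secondary care point is the degenerate regime $\mu_0\ge\mu^\ast$ (resp. $\mu_0\le\mu^{\ast\ast}$), where the closure touches $V$ at $\mu_0$ and ``exaggeration'' collapses to no disclosure; this should be flagged as the boundary case in which the stated intermediate policy reduces to the extreme one of Theorem~\ref{th:policy}.
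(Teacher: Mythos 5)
Your proposal is correct and takes essentially the same route as the paper's own proof: reduce each objective to a scalar function of the posterior belief via Theorem~\ref{th:struc}, translate the sign hypotheses on $\mathcal{R}$ and $\tilde{\mathcal{R}}$ into a convex-then-concave (resp.\ concave-then-convex) shape, and use the Kamenica--Gentzkow concavification to conclude that the optimal splitting puts one posterior at the boundary belief $0$ (resp.\ $1$), which forces $\pi(\iota_S=h|\iota=h)=1$ (exaggeration) resp.\ $\pi(\iota_S=l|\iota=l)=1$ (downplay). Your explicit tangent-chord lemma for the concave closure and the ordering $\tilde{\mathcal{R}}<\mathcal{R}$ used for part~4(b) are just more detailed renderings of the paper's $\mu_1/\mu_2$ contradiction argument and of the implication $\mathcal{A}(b)>0 \Rightarrow (\tilde{\mathcal{R}}>0 \Rightarrow \mathcal{R}>0)$ already invoked in the proof of Theorem~\ref{th:policy}.
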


Theorem \ref{th:partial policy} shows that a single flip of the signs of $\mathcal{R}(b;\mu)$ and $\tilde{\mathcal{R}}(b;\mu)$ as a function of $\mu$ can lead to the intermediate policies of exaggeration and downplay being optimal. Whether the flip happens from positive to negative, or negative to positive determines which of these two policies would be optimal. The proof of the theorem follows from our structural characterization results in Theorem \ref{th:struc} and the concavification arguments of Kamenica and Gentzow \cite{persuasion}. Using their concavification arguments, we can argue that when the government objectives are first strictly convex (concave) and then strictly concave (convex) in the public's belief $\mu$, exaggeration (downplay) is the optimal policy. The detailed proof of Theorem \ref{th:partial policy} can be found in the appendix \ref{sec:Analysis}.

Next, we show that there is an inherent symmetry in the optimal policies of the government for a two-state system like ours.

\begin{prop}(Information invariance and symmetry in policies)

Let $p_h:=\pi(\iota_S=h|\iota=h)$ and $p_l:=\pi(\iota_S=l|\iota =l)$. $(p_l,p_h)\in [0,1]\times[0,1]$ completely describes the policy $\pi$. Consider two policies $(a,b)$ and $(1-a,1-b)$. Then:
\begin{enumerate}
    \item For policy $(a,b)$ and signal $\iota_S$, the same distribution of posterior beliefs are generated by policy $(1-a,1-b)$ and signal $\iota_S^c$.
    \item Policy $(a,b)$ and signal $\iota_S$ generates the same best response among individuals as policy $(1-a,1-b)$ and signal $\iota_S^c$. 
    \item The expected objective, given the real infection level remains the same under both the policies $(a,b)$ and $(1-a,1-b)$.
\end{enumerate}\label{prop:invariance}
\end{prop}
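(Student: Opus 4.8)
The plan is to reduce all three claims to a single algebraic identity relating the two policies. Parametrize a policy by $(p_l,p_h)$ as in the statement, so that its conditional signal law is the pair of Bernoullis $\pi(\iota_S=h\mid\iota=h)=p_h$, $\pi(\iota_S=l\mid\iota=h)=1-p_h$, $\pi(\iota_S=l\mid\iota=l)=p_l$, $\pi(\iota_S=h\mid\iota=l)=1-p_l$. Write $\pi$ for the policy $(a,b)$ and $\pi'$ for $(1-a,1-b)$. First I would verify by direct substitution the key identity
$$\pi'(\iota_S^c\mid\iota)=\pi(\iota_S\mid\iota)\qquad\text{for all }\iota,\iota_S\in I,$$
i.e. flipping the policy to $(1-a,1-b)$ while simultaneously relabeling the emitted signal $\iota_S\mapsto\iota_S^c$ returns exactly the original conditionals (e.g. $\pi'(l\mid h)=b=\pi(h\mid h)$, and so on for the other three entries). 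This is the crux of the proposition: once it holds, every quantity built out of the conditionals $\pi(\cdot\mid\cdot)$ and the fixed prior $\mu_0$ is automatically invariant under the joint flip-and-relabel.

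For claim~1, I would observe that for fixed prior $\mu_0$ the Bayes posterior $\mu_\pi(\iota\mid\iota_S)$ in the model is a function of $\pi(\iota_S\mid\iota)$ and $\pi(\iota_S\mid\iota^c)$ alone. Substituting the key identity into the Bayes update yields $\mu_{\pi'}(\iota\mid\iota_S^c)=\mu_\pi(\iota\mid\iota_S)$, so the posterior belief (a distribution over $I$) produced by $(a,b)$ on $\iota_S$ equals the one produced by $(1-a,1-b)$ on $\iota_S^c$. The signal marginals likewise match, $\sum_{\iota}\pi'(\iota_S^c\mid\iota)\mu_0(\iota)=\sum_{\iota}\pi(\iota_S\mid\iota)\mu_0(\iota)$, so these coinciding posteriors are generated with equal probability; hence the two policies induce the same distribution over posteriors, merely with the two signal labels interchanged.

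Claim~2 is then essentially immediate: in the epidemic game the per-signal problem is the public goods game with benefit $\tilde b(\cdot;\mu_\pi,\iota_S)$ and unilateral effort $e^*(\mu_\pi)$ from \eqref{eq:net_eff}, so each best response, and indeed the entire set $\NE(\iota_S;\pi)$, depends on the signal only through the induced posterior. Since the posteriors agree by claim~1 and the network is untouched, we get $\NE(\iota_S;\pi)=\NE(\iota_S^c;\pi')$ as sets, giving $x^*(\iota_S;\pi)=x^*(\iota_S^c;\pi')$ for the matched equilibria. For claim~3 I would condition on the real state $\iota$ and expand the inner expectation over the signal,
$$\mathbb{E}_{\iota_S\sim\pi(\cdot\mid\iota)}\big[\mathcal{O}(x^*;\pi)\big]=\sum_{\iota_S\in I}\pi(\iota_S\mid\iota)\,\mathcal{O}\big(x^*(\iota_S;\pi);\pi\big),$$
and then reindex by $\iota_S\mapsto\iota_S^c$, inserting $\pi(\iota_S\mid\iota)=\pi'(\iota_S^c\mid\iota)$ (key identity) and $\mathcal{O}(x^*(\iota_S;\pi);\pi)=\mathcal{O}(x^*(\iota_S^c;\pi');\pi')$ (claim~2). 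This converts the sum term-by-term into the corresponding conditional expectation under $\pi'$, proving equality of the $\iota$-conditioned objective for both \eqref{eq:obj1} and \eqref{eq:obj2}; since $\mu_0$ is unchanged, averaging over $\iota$ then gives equality of the full objective.

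The computation is mechanical once the identity is in hand, so the only genuine obstacle is conceptual rather than algebraic: the flip $(a,b)\mapsto(1-a,1-b)$ and the signal relabeling $\iota_S\mapsto\iota_S^c$ must be applied \emph{simultaneously}, and one must check that the equilibrium-selection step respects the symmetry. Concretely, the care lies in showing $\NE(\iota_S;\pi)$ and $\NE(\iota_S^c;\pi')$ are the \emph{same} set, so that taking the best equilibrium (Def.~\ref{def:OSE}) or the worst equilibrium (Def.~\ref{def:PSE}) does not break the correspondence used in claim~3; everything else is bookkeeping with Bernoulli conditionals and Bayes' rule.
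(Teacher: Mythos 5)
Your proof is correct, and at bottom it performs the same symmetry verification as the paper's: both check that flipping the policy to $(1-a,1-b)$ while relabeling the signal $\iota_S\mapsto\iota_S^c$ leaves the Bayes posteriors, hence the per-signal public goods game, hence the conditional objective, unchanged. The difference is in how the computation is factored. The paper never isolates your kernel-level identity $\pi'(\iota_S^c\mid\iota)=\pi(\iota_S\mid\iota)$; instead it writes out the four posterior formulas explicitly and observes the cross-equalities $\mu_{(1-a,1-b)}(\cdot\mid s)=\mu_{(a,b)}(\cdot\mid s^c)$, then repeats the substitution at the level of expected utilities, and finally at the level of the expected-effort objective, where its part 3 in fact stops after displaying the two expressions, leaving the concluding substitution $x^*_{h(1-a,1-b)}=x^*_{l(a,b)}$, $x^*_{l(1-a,1-b)}=x^*_{h(a,b)}$ implicit. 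Your factorization buys two things the paper leaves unaddressed: (i) the signal-marginal equality $\sum_\iota\pi'(\iota_S^c\mid\iota)\mu_0(\iota)=\sum_\iota\pi(\iota_S\mid\iota)\mu_0(\iota)$, which is what actually makes the two policies induce the same \emph{distribution} over posteriors rather than merely equal posterior values; and (ii) the set equality $\NE(\iota_S;\pi)=\NE(\iota_S^c;\pi')$, which guarantees that the correspondence survives both the optimistic ($\max$) and pessimistic ($\min$) equilibrium selections and applies verbatim to both objectives \eqref{eq:obj1} and \eqref{eq:obj2}, whereas the paper's part 3 is written only for a selected best-response effort. So: same route in substance, but your version is tighter and closes small gaps in the paper's own write-up.
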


Having such information invariance would mean that we can restrict the domain space of policies to only the half of the current space i.e., the domain defined by $p_h, p_l \geq0$ and $p_h+p_l \leq 1$ is sufficient. Figure \ref{fig:diff_policies} gives a pictorial representation of the full disclosure, no disclosure, exaggeration and downplay policies on the $(p_l,p_h)$ plot and summarises our results in Theorems \ref{th:policy} and \ref{th:partial policy}. The invariance discussed in Proposition \ref{prop:invariance} is also depicted in the figure.

\pgfplotsset{
  /pgfplots/xlabel near ticks/.style={
     /pgfplots/every axis x label/.style={
        at={(ticklabel cs:0.5)},anchor=near ticklabel
     }
  },
  /pgfplots/ylabel near ticks/.style={
     /pgfplots/every axis y label/.style={
        at={(ticklabel cs:0.5)},rotate=90,anchor=near ticklabel}
     }
  }

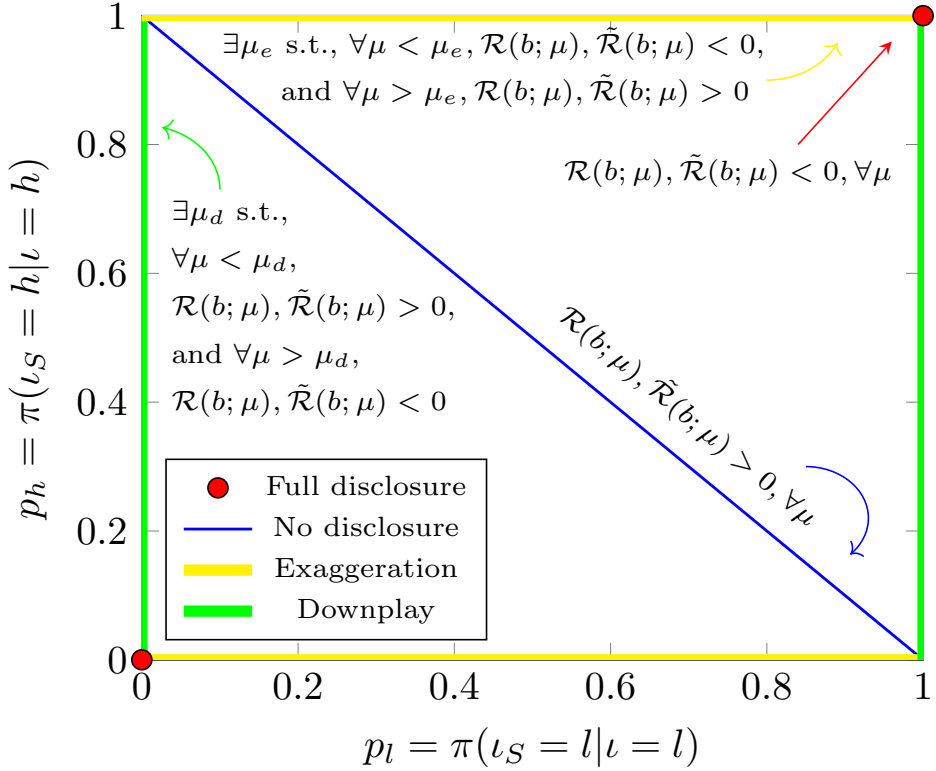
\begin{figure}[htp]
\begin{center}
    \begin{tikzpicture}[scale=1.5]
\begin{axis}[
xmin = 0, xmax = 1,
ymin = 0, ymax = 1,
legend pos=south west,
restrict y to domain=0:1,
restrict x to domain=0:1,
    ylabel near ticks,
    xlabel near ticks,
    xlabel={$p_l = \pi(\iota_S=l|\iota=l)$},
    ylabel={$p_h = \pi(\iota_S=h|\iota=h)$},
]

\addplot[mark=*,only marks, fill=red, mark size=2.5pt] coordinates {(0,0)} node[below, pos=1]{};
\node[align=left] at (0.75,0.76) {\scriptsize $\mathcal{R}(b;\mu), \tilde{\mathcal{R}}(b;\mu)<0, \forall \mu$}; 
\node[align=left, rotate=-40] at (0.7,0.38) {\scriptsize $\mathcal{R}(b;\mu), \tilde{\mathcal{R}}(b;\mu)>0, \forall \mu$}; 
\node[align=left] at (0.45,0.92) {\scriptsize $\exists \mu_e$ s.t., $\forall \mu< \mu_e, \mathcal{R}(b;\mu), \tilde{\mathcal{R}}(b;\mu)<0$, \\ \scriptsize \hspace{1.5em} and $\forall \mu>\mu_e, \mathcal{R}(b;\mu), \tilde{\mathcal{R}}(b;\mu)>0$ }; 
\node[align=left] at (0.22,0.545) {\scriptsize $\exists \mu_d$ s.t., \\ \scriptsize $\forall \mu< \mu_d$, \\ \scriptsize $\mathcal{R}(b;\mu), \tilde{\mathcal{R}}(b;\mu)>0$, \\ \scriptsize and $\forall \mu>\mu_d,$ \\ \scriptsize $\mathcal{R}(b;\mu), \tilde{\mathcal{R}}(b;\mu)<0$ }; 
\addplot[color=blue,domain=0:1,samples=1000,smooth, line width=0.25mm] {-x+1} node[right,pos=1] {};
\addplot[color=yellow,domain=0:1,samples=1000,smooth, line width = 1mm] {0} node[right,pos=1] {};
\addplot[line width=1mm, samples=50, smooth,domain=0:6,green] coordinates {(1,0)(1, 1)};
\addplot[line width=1mm, samples=50, smooth,domain=0:6,green] coordinates {(0,0)(0, 1)};
\addplot[color=yellow,domain=0:1,samples=1000,smooth, line width=1mm] {1} node[right,pos=1]{};
\addplot[mark=*,only marks, fill=red, mark size=2.5pt] coordinates {(1,1)} node[below, pos=1]{};
\draw [-stealth, red](0.84,0.8) -- (0.96,0.96);
\draw[-> , green] (0.1,0.73) arc (0:75:0.1);
\draw[-> , yellow] (0.8,0.9) arc (270:335:0.1);
\draw[-> , blue] (0.85,0.3) arc (90:-45:0.08);
   \addlegendentry{\scriptsize Full disclosure}
   \addlegendentry{\scriptsize No disclosure}
   \addlegendentry{\scriptsize Exaggeration}
   \addlegendentry{\scriptsize Downplay}
\end{axis}
\end{tikzpicture}
\end{center}
\caption{Summary of results in Theorem \ref{th:policy} and \ref{th:partial policy}. Every point $(p_l,p_h)$ in this plot corresponds to a possible signaling policy of the government. The class of no disclosure policies correspond to the diagonal corresponding to $p_h+p_l=1$ (marked in blue), while full disclosure corresponds to the points $(p_l,p_h)=(1,1),(0,0)$ (marked in red). Exaggeration (yellow) and downplay (green) policies correspond to the lines $p_h=1$ and $p_l=1$ respectively, and by Proposition \ref{prop:invariance}, also to lines $p_h=0$ and $p_l=0$. The conditions under which each of the policies are optimal for different government objectives are dependent on the sign of the functions $\mathcal{R}$ and $\tilde{\mathcal{R}}$, and are written beside each policy in the figure.}
     \label{fig:diff_policies}
\end{figure}

\subsection{Illustration of main results}

We now illustrate these results for some examples of benefit functions. Before we begin, recall the desired conditions on these functions. The benefit functions $b(x; h)$ and $b(x; l)$ are increasing and concave in the argument $x$. Moreover, since these functions give us the probability of being safe in the high and low infection states, we require them to be non-negative and to saturate to $1$ as the effort $x$ goes to $\infty$. Also, since one is safer in the low infection state than the high infection state, we have $b(x; h)< b(x; l)$, $\forall x$. 

The parameters in the examples below are chosen appropriately to ensure all these conditions are satisfied and that Assumption \ref{asu:1} holds. Example \ref{ex:1} also satisfies Assumption \ref{asu:2}, whereas Example \ref{ex:2} shows a case in which Assumption \ref{asu:2} is relaxed.

\begin{example} \label{ex:1}
Consider exponential benefit functions given by \begin{align*}
    b(x; h) = 1-He^{-x}, \hspace{2em}
    b(x; l) =  1-Le^{-x},
\end{align*} with $H>L$ and $c <L$ (the latter ensures that Assumption \ref{asu:2} holds). 

    Since the risk aversion and prudence of exponential functions are known to be constants, we can easily see that $\mathcal{R}(b;\mu)>0$ and $\tilde{\mathcal{R}}(b;\mu)=0$ for the above benefit function. This means that $e^*(\mu)$ is strictly concave, leading to no disclosure being the optimal policy for governments maximizing aggregate effort (objective \eqref{eq:obj1}) with any attitude. This is because, by Theorem \ref{th:struc}, objective \eqref{eq:obj1} is proportional to the expected value of $e^*$. On the other hand, $b(e^*(\mu))$ is independent of $\mu$. This causes all policies to be equally effective for governments with objective \eqref{eq:obj2} which reduce to expected $b(e^*)$ through the structural characterization (cf. Theorem \ref{th:struc}, points 3 and 4(a)). Since the final objective in Theorem \ref{th:struc} part 4(b) is a linear combination of $e^*$ and $b(e^*)$, no disclosure is a common signaling policy which is optimal for all the four government objective and attitude cases considered.
    This can also be seen using the sufficient conditions derived in Corollary \ref{cor:suff_cond}.

    \begin{figure}[htp]
     \centering
     \begin{subfigure}[b]{0.495\textwidth}
         \centering
         \includegraphics[width=\textwidth]{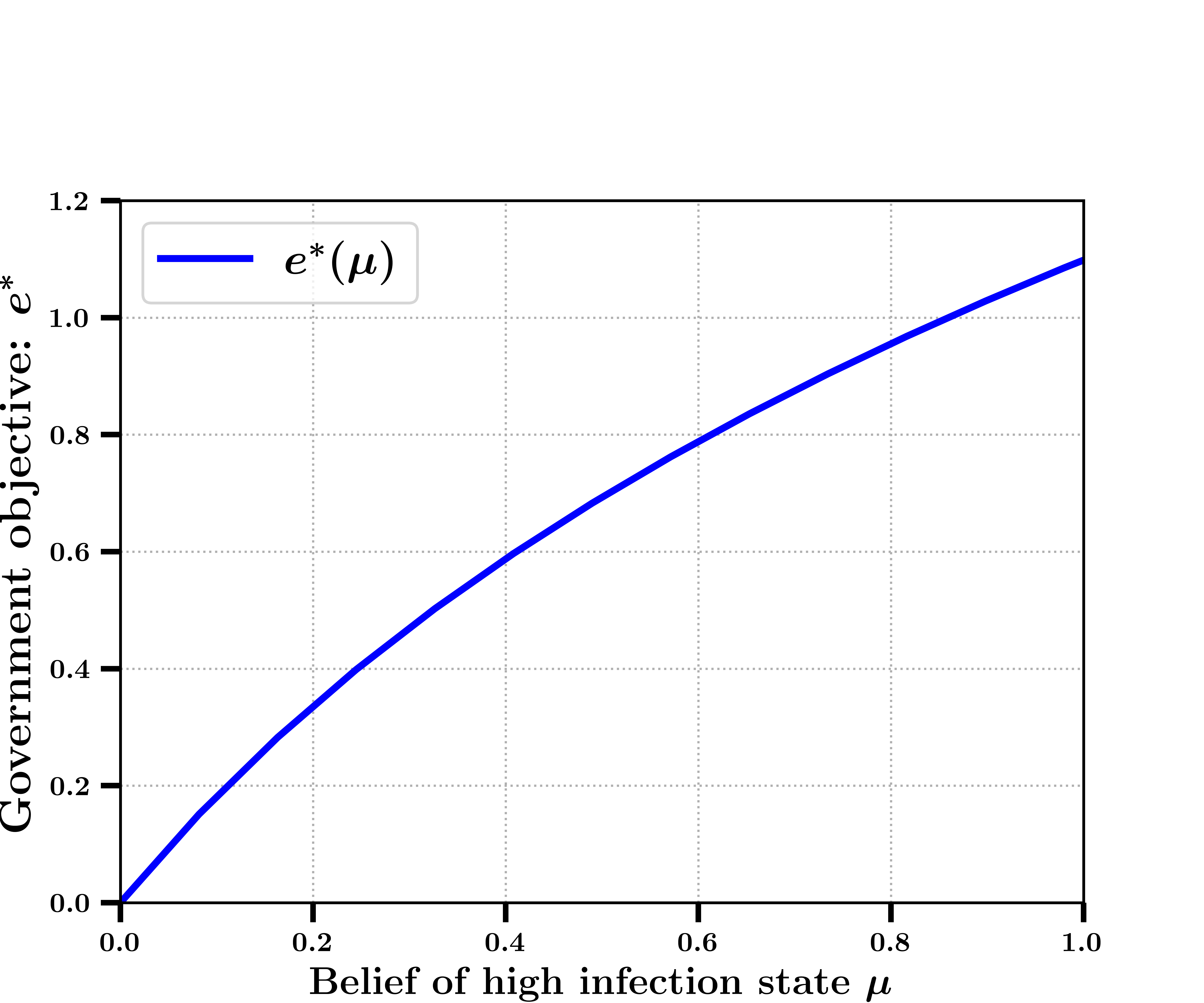}
     \end{subfigure}
     \hfill
     \begin{subfigure}[b]{0.495\textwidth}
         \centering
         \includegraphics[width=\textwidth]{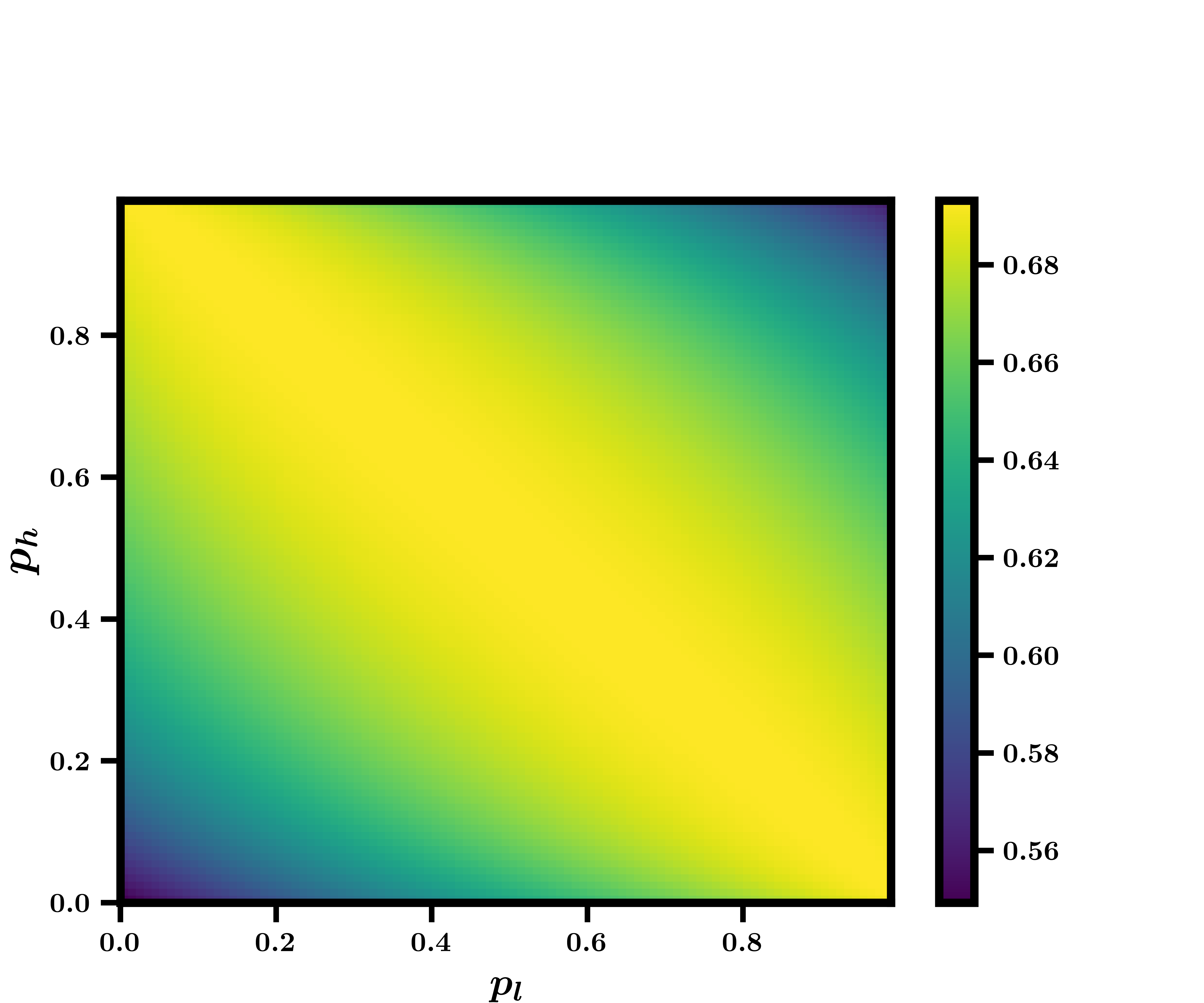}
     \end{subfigure}
        \caption{Plots for the benefit function considered in Example 1: (L) Curvature of the unilateral effort $e^*$ as a function of the belief of the high infection state $\mu$, which determines the optimal information disclosure policy. (R) Density plot for the expected unilateral effort $\mathbb{E}[e^*]$ as a function of the policy $\pi=(p_l,p_h)$.}
        \label{fig:ex1}
\end{figure}
    
    Figure \ref{fig:ex1}(L) shows that the objectives which reduce to expected $e^*$ as a function of $\mu$ are strictly concave for the benefit function considered in Example \ref{ex:1}. This means that no disclosure is the optimal policy. This can also be seen in Figure \ref{fig:ex1}(R), which is the density plot of the government's objective of maximizing $e^*$ on the $(p_l,p_h)$ policy plot as in Figure $\ref{fig:diff_policies}$. Notice that the maximum (yellow) is achieved on the line $p_h+p_l=1$, which is the line of `no disclosure'.
    
\end{example}

\begin{example} \label{ex:2}
Consider the same benefit functions as before \begin{align*}
    b(x; h) = 1-He^{-x}, \hspace{2em}
    b(x; l) =  1-Le^{-x}
\end{align*} with $H>L$ but $c>L$. Notice that for such $c$ Assumption \ref{asu:2} does not hold, whereby Theorems \ref{th:policy} and \ref{th:partial policy} do not directly apply. Since the boundary conditions on the effort $e^*$ dictate it to be non-negative, $e^*(\mu)$ is no longer strictly concave (see Figure \ref{fig:ex2}(L)). From the density plot in Figure \ref{ex:2}(R), we see that the maximum (yellow) occurs on a point on the line $p_h=1$ and $p_h=0$, which shows that the optimal policy lies in the class of exaggeration policies. This can also be analytically calculated by applying the techniques used for proving Theorem \ref{th:partial policy}. As in Example \ref{ex:1}, $b(e^*(\mu))$ is independent of $\mu$, and all policies perform equally well. The exaggeration policy obtained above is therefore a common signaling policy optimal for all the four government objective and attitude cases considered.

\begin{figure}[htp]
     \centering
     \begin{subfigure}[b]{0.495\textwidth}
         \centering
         \includegraphics[width=\textwidth]{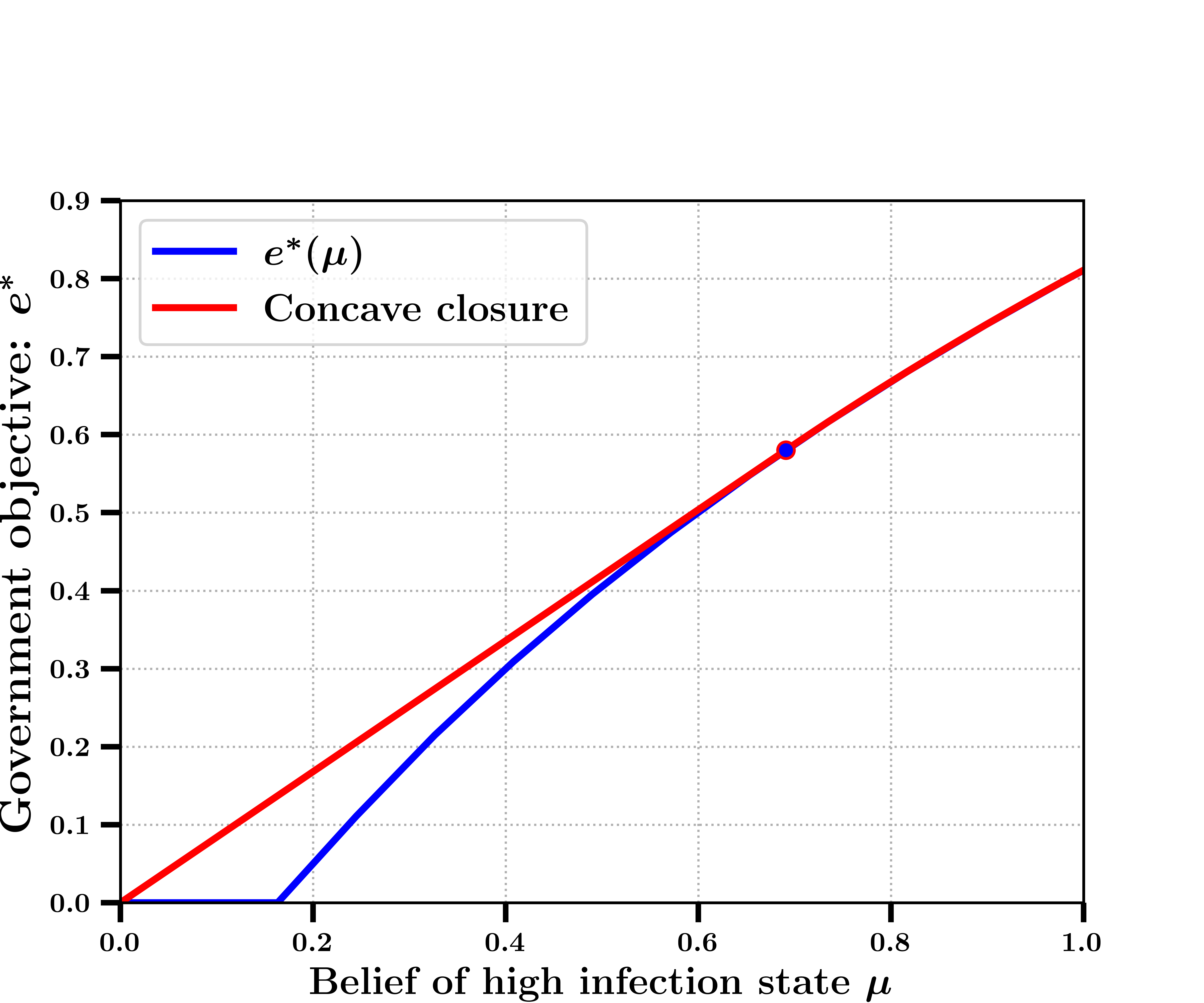}
     \end{subfigure}
     \hfill
     \begin{subfigure}[b]{0.495\textwidth}
         \centering
         \includegraphics[width=\textwidth]{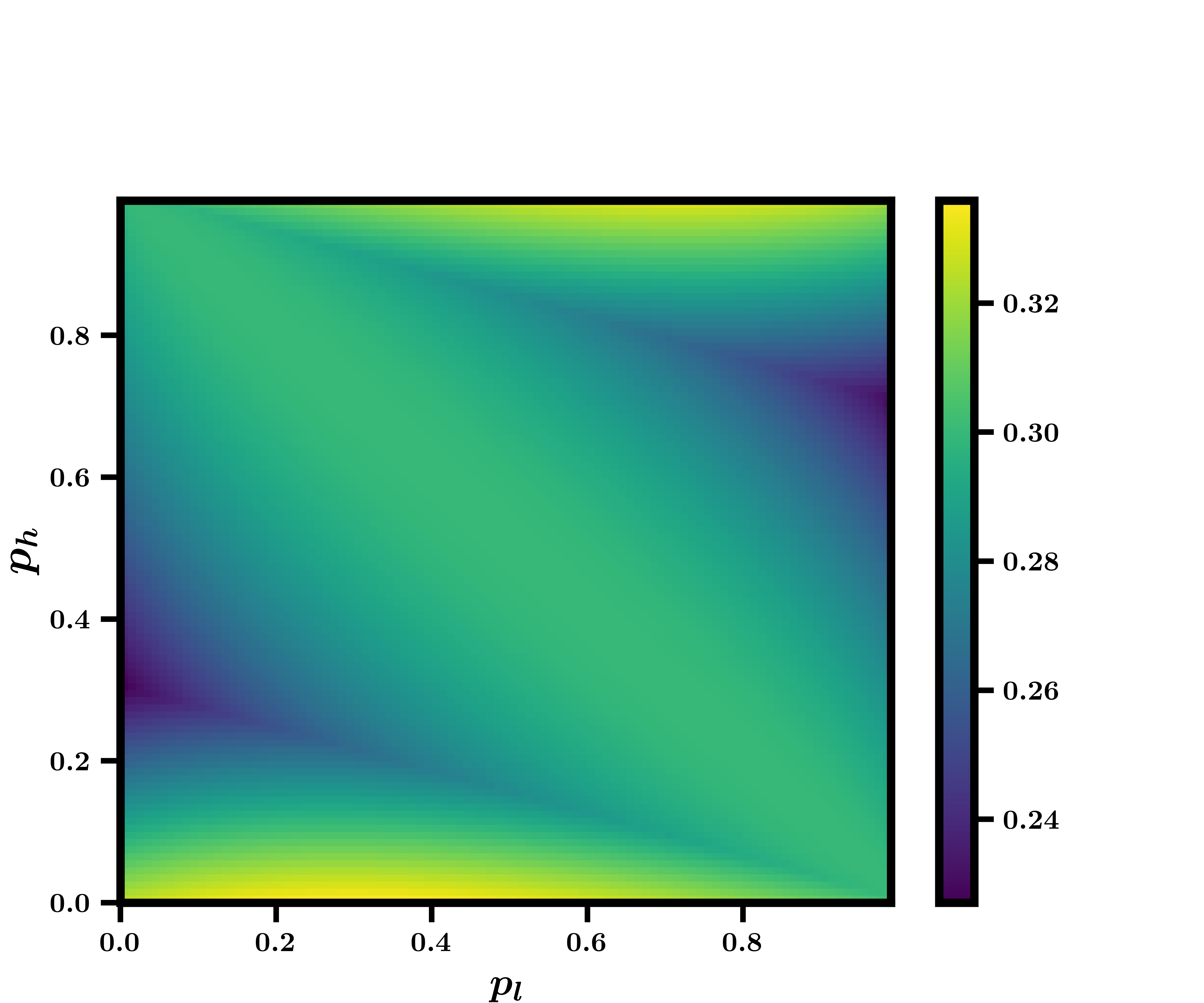}
     \end{subfigure}
        \caption{Plots for the benefit functions in Example 2: (L) Curvature of the unilateral effort $e^*$ as a function of the belief of the high infection state $\mu$; (R) Density plot for the expected unilateral effort $\mathbb{E}[e^*]$ as a function of the policy $\pi=(p_h,p_l)$.}
        \label{fig:ex2}
\end{figure}
\end{example}

\section{Discussion and Conclusion} \label{sec:discussion}



\subsection{Policy implications of equilibrium analysis and structural characterization of objectives}
Recall the performance of the different equilibria of the epidemic game for both the government objectives from Theorem \ref{th:struc}. The equilibrium performing the best among all Nash equilibria is a specialized equilibrium (Theorem \ref{th:struc}, parts 1 and 4). Whereas, whenever a distributed equilibrium exists, it performs the worst among all Nash equilibria for both the objectives (Theorem \ref{th:struc}, parts 2 and 3). Therefore, government policies which force every individual to take at least some preventive measure, either forces the system out of equilibrium (since distributed equilibria might not always exist) or takes the system to a socially sub-optimal state for the society.

The structural characterization of the objectives allows us to independently look at two other kinds of intervention design, in addition to information design: incentive design (direct incentives in the form of transfers, subsidies, fines which change actions of individuals through change of preferences by tempering with their marginal utilities) and network design (constraining the connections among individuals). Incentives also affect the component in the structural characterization which is independent of the network. On the other hand, network design interventions affect the graph-theoretic terms in the structural characterizations.

In the epidemic management setting, incentives affect the cost of the effort taken, which is captured in our model through the marginal cost of effort parameter $c$. The unilateral effort $e^*$ changes with $c$ according to \eqref{eq:net_eff}, i.e.,
\begin{align*}
    b'(e^*) = c.
\end{align*} Since $b'$ is a decreasing function, the effort $e^*$ is inversely related to the marginal cost of effort $c$. Thus, transfers which reduce $c$ increase $e^*$ (and also $b(e^*)$ since $b$ is an increasing function) and lead to better societal outcomes, as one would anticipate.

Lockdowns and social distancing norms announced by governments are examples of network design interventions. Unlike incentive design and information design, network design affects the network dependent term in the structural characterization of the objectives. Note that interventions which arbitrarily cut connections among individuals may not have an impact on the welfare of the society, and only those network design interventions which affect the network dependent term in the structural characterization are useful. For example, in the optimistic case for maximizing the aggregate effort, the network dependent term is the maximum independent set (Theorem \ref{th:struc} part 1). Thus, a graph with a larger maximum independent set would perform better than one with a smaller maximum independent set. Among all connected graphs on $n$ nodes, a star network has the largest maximum independent set and thus performs the best for this objective. The other objectives have a different dependence on the  network properties, as seen in Theorem \ref{th:struc}, leading to different choices of optimal networks for each of them. 
The study of the structure of these optimal networks could potentially also contribute to other epidemic related aspects such as decisions regarding distribution of essential services to public during epidemics. 


\subsection{Optimal signaling policies and robustness of results}
The results related to the optimal information disclosure policy are linked to the risk aversion and prudence of the probability of being safe, as a function of their effort. The results imply that when the government objective is proportional to expected $e^*$, only the prudence of individuals play a role, while if the government objective is proportional to expected $b(e^*)$, both risk aversion and prudence play an equal role in determining the optimal policy. 

From the structural characterization of the objectives (Theorem \ref{th:struc}), we see that the government objectives effectively reduce to either expected $e^*$, expected $b(e^*)$, or positive linear combinations of these. Moreover, we see that there exists a dependence between the optimal disclosure policies for the objectives which reduce to $e^*$ with those that reduce to $b(e^*)$ -- when full disclosure was optimal for $e^*$, it is also optimal for the other objectives and when no disclosure was optimal for $b(e^*)$, it is also optimal for the other objectives. This gives us at least certain cases under which different government objectives share a common optimal policy. We also see this in the illustrative examples. This is an important property that an ideal policy should have since governments often have multiple qualitatively similar, but quantitatively different objectives. It ensures that the nature of the optimal policy is robust to these different government objectives.

Moreover, since the network dependent terms form a separate component in the structural characterization which is unaffected by the public signaling policies, our results are robust to different network structures and are thus also scalable to very large networks.

\subsection{Conclusions}
In this paper, we considered a problem of information design to change the perception of interacting individuals in the society, thereby nudging them to behave pro-socially during epidemic. We formulated a model that captures this real-world process at two levels: at the level of the individual (formulated as a networked public goods game \cite{bramoulle}) and at the level of the government (modelled using the framework of Bayesian persuasion \cite{persuasion}). 
	
We studied the problem of intervention design for two possible objectives of the government (maximizing the society's total actions taken to stay safe, and maximizing the probability of a random individual being safe during the epidemic) for two government attitudes (optimistic and pessimistic). Despite the complexity of the setting, we discovered a structural result that helped in the design of information disclosure policies. The structural result showed a decomposition in the government's objective, leading to them having two independent components: a network dependent term and another depending on the utility function alone. Using this characterization, we found conditions under which different policies are optimal. We saw that these conditions depend on the risk aversion and prudence of the benefit functions. We also found sufficient conditions, which depend on the benefit functions in the high and low infection state alone with no dependence on the public's beliefs, for which the optimal policies were to disclose no information and to disclose full information to the public.

All results in this paper focus on situations where the same signal is sent to every individual. For a long time in history, these were the only types of signals that could be sent since the communication between the government and the society was limited to broadcasting media like television and radio. With the advent of mobile applications in the recent past, information dissemination can happen through personalized signals. These signals can either share information about the local infection level, or the signals can be such that they are privately observed. While some recent studies \cite{sitabhra, arieli2019private} explore these new avenues, a systematic analysis is still an open problem and an interesting future direction.

\bibliography{ref}

\appendix

\section{Analysis and proofs of results} \label{sec:Analysis}
\subsection{Preliminary analysis}
Recall the observation we made during our model formulation in equation \eqref{eq:epidemic game} where we establish that the expected utility equation of epidemic game at the level of the individuals is the same as the utility equation of the networked public goods game. Our results in Section \ref{sec:results} follow from the work by Pandit and Kulkarni \cite{parthe}, summarized in Lemma \ref{lem:parthe}. They derive these results by characterizing the equilibria of the public goods game introduced in \cite{bramoulle} as solution to the following Linear Complementarity Problem \cite{cottle1992linear}
\begin{align}
    \text{LCP:}\qquad x^*\geq 0,\quad (A_G+I_n)x^* \geq v^*, \quad
    (x^*)^\top ((A_G+I_n)x^*-v^*) = 0, \label{eq:LCP}
\end{align}
    where $A_G$ is the adjacency matrix of graph $G$, $I_n$ is an identity matrix of size $n$ and $v^*$ is a vector of size $n$ with each component equal to $e^*$.

Also recall the mathematical expression of government's objective of maximizing the probability of a randomly selected individual being safe in \eqref{eq:obj2}. Through simple arithmetic, we can see that
\begin{align*}
    \mathbb{E}_{\iota, \iota_S} b((\mathcal{E}_k(x))_{\iota_S}; \iota)=\mathbb{E}_{\iota,\iota_S} \tilde{b}((\mathcal{E}_k(x)); \mu_{\pi},\iota_S)
\end{align*} 
where $\tilde{b}((\mathcal{E}_k(x));\mu_\pi,\iota_S) := \mu_\pi(\iota=h|\iota_S)b_h((\mathcal{E}_k(x))_{\iota_S})+\mu_\pi(\iota=l|\iota_S)b_l((\mathcal{E}_k(x))_{\iota_S})$ is the benefit function in the resulting public goods game when signal $\iota_S$ is received, as defined in \eqref{eq:epidemic game}.


Thus, objective \eqref{eq:obj2} is equivalent to \begin{align}
    \max_\pi \mathbb{E}_{\iota, \iota_S}[\frac{1}{n} \sum_{k\in V}b((\mathcal{E}_k(x))_{\iota_S};\iota)] & = \frac{1}{n}\max_\pi\mathbb{E}_{\iota, \iota_S} [B(x_{\iota_S})] \label{eq:eqobj2}
\end{align} where  $B(x_{\iota_S}) := \sum_{k\in V} \tilde{b}(\mathcal{E}_k(x));\mu_\pi,\iota_S)$ is the aggregate benefit in this public goods game.
Hence the objective of maximizing the probability of a randomly selected individual being safe is the same as that of maximizing expected aggregate benefit in the resulting public goods game.

The following lemma studies the performance of distributed equilibria in public goods games for the objective of aggregate benefit.

\begin{lemma}
(Aggregate benefit performance of distributed equilibria in public goods game)

For a public goods game with utility equation given by $U_k(x) = b(\mathcal{E}_k(x)) - c\cdot x_k$, whenever a distributed equilibrium exists, it achieves the least possible aggregate benefit $B(x):=\sum_{k\in V}b(x_k+\sum_{j\in N_k}x_j)$ among all Nash equilibria. Moreover, for distributed equilibrium $B(x) = nb(e^*)$. \label{lem:2}
\end{lemma}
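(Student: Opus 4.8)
The plan is to work entirely within the Linear Complementarity Problem formulation \eqref{eq:LCP}, which characterizes every Nash equilibrium of the public goods game, and to exploit only the monotonicity of $b$ together with the complementarity condition. No concavity or prudence arguments are needed here; the result is a direct consequence of the LCP structure.

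First I would establish the lower bound $B(x^*) \geq n\,b(e^*)$ that holds at \emph{every} Nash equilibrium. Note that the $k$-th component of $(A_G + I_n)x^*$ is precisely $x_k^* + \sum_{j \in N_k} x_j^* = \mathcal{E}_k(x^*)$, so the feasibility condition $(A_G + I_n)x^* \geq v^*$ in \eqref{eq:LCP} reads $\mathcal{E}_k(x^*) \geq e^*$ for every node $k$. Since $b$ is increasing, this yields $b(\mathcal{E}_k(x^*)) \geq b(e^*)$ for each $k$, and summing over $k \in V$ gives $B(x^*) = \sum_{k \in V} b(\mathcal{E}_k(x^*)) \geq n\,b(e^*)$. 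Thus $n\,b(e^*)$ is a uniform lower bound on the aggregate benefit across all equilibria.

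Next I would show that a distributed equilibrium attains this bound with equality. At a distributed equilibrium every agent exerts strictly positive effort, so $x_k^* > 0$ for all $k$. The complementarity condition $(x^*)^\top\big((A_G+I_n)x^* - v^*\big) = 0$ expands as $\sum_{k \in V} x_k^*\big(\mathcal{E}_k(x^*) - e^*\big) = 0$, a sum whose terms are each nonnegative because $x_k^* \geq 0$ and $\mathcal{E}_k(x^*) - e^* \geq 0$ by feasibility. A sum of nonnegative terms vanishes only if every term vanishes, and since $x_k^* > 0$ this forces $\mathcal{E}_k(x^*) = e^*$ for all $k$. Hence $B(x^*) = \sum_{k \in V} b(e^*) = n\,b(e^*)$, which proves the second claim and, combined with the lower bound, shows that the distributed equilibrium achieves the least aggregate benefit among all Nash equilibria.

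The argument is essentially immediate once the LCP is in hand; the only point requiring care is the complementarity step, where one must observe that both factors $x_k^*$ and $\mathcal{E}_k(x^*) - e^*$ are individually nonnegative, so that termwise vanishing can be deduced from the vanishing of their sum. I anticipate no substantial obstacle beyond verifying that the $k$-th row of $(A_G + I_n)x^*$ correctly reproduces the neighbourhood effort $\mathcal{E}_k(x^*)$, which is exactly the identification used to recast the best-response characterization of \cite{bramoulle} as \eqref{eq:LCP}.
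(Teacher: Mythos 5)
Your proof is correct and follows essentially the same route as the paper's: the feasibility part of the LCP \eqref{eq:LCP} plus monotonicity of $b$ gives the uniform lower bound $n\,b(e^*)$, and strict positivity of efforts in a distributed equilibrium combined with complementarity forces $\mathcal{E}_k(x^*)=e^*$ for all $k$, attaining the bound. The only difference is cosmetic: you spell out the termwise-vanishing argument in the complementarity step, which the paper leaves implicit.
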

\begin{proof}
\begin{enumerate}
    \item From LCP \eqref{eq:LCP}, the collective neighbourhood effort $\mathcal{E}_k(x)\geq e^*$, $\forall k$ in equilibrium. Since $b$ is an increasing function, $\sum_k b(\mathcal{E}_k(x)) \geq nb(e^*)$.
    \item By definition, in a distributed equilibrium $x_k > 0$, $\forall k$. Therefore, from LCP \eqref{eq:LCP}, in a distributed equilibrium, $\mathcal{E}_k(x) = e^*$, $\forall k$. Thus, $\sum_k b(\mathcal{E}_k(x)) = nb(e^*)$.
\end{enumerate}
\end{proof}

Next, for studying the best Nash equilibria for the maximizing the aggregate benefit in a public goods game, we first find the following bounds. The proofs of Lemma \ref{lem:3} and Corollary \ref{cor:1} are adapted from the proofs of \cite[Theorem 2]{parthe}.

\begin{lemma}
(Bounds on the optimistic equilibrium for maximizing aggregate benefit in public goods game) 
\begin{align}
        nb(e^*)&\leq \max_{x\in \NE} B(x) \leq nb(ne^*) \label{eq:8}\\
        nb(e^*) + (\alpha_{(A+I)e} - n)\sigma_bce^*&\leq \max_{x\in \NE} B(x) \leq nb(e^*)  +(\alpha_{(A+I)e}-n)ce^* \label{eq:9}
    \end{align}
    where $\alpha_{(A+I)e}$ is the $(A_G+I_n)e$-weighted independence number. 

\label{lem:3}
\end{lemma}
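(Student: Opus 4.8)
The plan is to establish the two pairs of inequalities in \eqref{eq:8} and \eqref{eq:9} by leveraging the equilibrium bounds from Lemma \ref{lem:parthe} part 3 together with the optimality characterization of the specialized equilibrium from Lemma \ref{lem:parthe} part 1. The key observation is that $B(x) = \sum_{k \in V} b(\mathcal{E}_k(x))$ is a sum over nodes of the per-node benefit, and each summand is controlled by the pointwise bounds $b(e^*) \le b(\mathcal{E}_k(x)) \le b(ne^*)$ and $b(e^*) + c\sigma_b(\mathcal{E}_k(x) - e^*) \le b(\mathcal{E}_k(x)) \le b(e^*) + c(\mathcal{E}_k(x) - e^*)$ supplied by Lemma \ref{lem:parthe}.

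First I would derive \eqref{eq:8}: summing the crude pointwise bounds $b(e^*) \le b(\mathcal{E}_k(x)) \le b(ne^*)$ over all $n$ nodes gives $nb(e^*) \le B(x) \le nb(ne^*)$ for every $x \in \NE$, and since this holds uniformly it passes to the maximum over Nash equilibria, yielding \eqref{eq:8}. The lower bound $nb(e^*)$ is in fact attained (by a distributed equilibrium, via Lemma \ref{lem:2}), so it is the tightest possible, though for \eqref{eq:8} I only need the inequality.

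The more delicate part is \eqref{eq:9}, and this is where I expect the main work. Starting from the sharper linear bounds, I would sum $b(e^*) + c\sigma_b(\mathcal{E}_k(x) - e^*) \le b(\mathcal{E}_k(x)) \le b(e^*) + c(\mathcal{E}_k(x) - e^*)$ over $k$ to get
\begin{align*}
    nb(e^*) + c\sigma_b\Big(\textstyle\sum_k \mathcal{E}_k(x) - ne^*\Big) \le B(x) \le nb(e^*) + c\Big(\textstyle\sum_k \mathcal{E}_k(x) - ne^*\Big).
\end{align*}
The crucial step is to identify $\sum_k \mathcal{E}_k(x) = \sum_k \big(x_k + \sum_{j \in N_k} x_j\big) = e^\top (A_G + I_n) x$, so that maximizing $B(x)$ over Nash equilibria is tied to maximizing $e^\top(A_G+I_n)x$, i.e. the $(A_G+I_n)e$-weighted aggregate effort. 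Here I would invoke Lemma \ref{lem:parthe} part 1 with weight vector $w = (A_G+I_n)e$: the specialized equilibrium corresponding to the $(A_G+I_n)e$-weighted maximum independent set attains the maximum weighted aggregate effort, which equals $\alpha_{(A+I)e} e^*$ since each node in that independent set contributes effort $e^*$ and the weighted count of those nodes is the weighted independence number. Substituting $\sum_k \mathcal{E}_k(x) = \alpha_{(A+I)e} e^*$ at this maximizing equilibrium gives $\sum_k \mathcal{E}_k(x) - ne^* = (\alpha_{(A+I)e} - n)e^*$, which produces exactly the terms $(\alpha_{(A+I)e} - n)\sigma_b c e^*$ and $(\alpha_{(A+I)e} - n)ce^*$ in \eqref{eq:9}.

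The main obstacle is justifying the interchange between maximizing $B(x)$ and maximizing the linear functional $e^\top(A_G+I_n)x$ over $\NE$: the pointwise bounds hold for every equilibrium, but the equilibrium that maximizes $B$ need not be the same as the one maximizing $e^\top(A_G+I_n)x$. The careful argument is that the upper bound in \eqref{eq:9} follows because $B(x) \le nb(e^*) + c(e^\top(A_G+I_n)x - ne^*)$ holds for \emph{every} $x \in \NE$, and the right side is maximized at the $(A_G+I_n)e$-weighted maximum independent set equilibrium by Lemma \ref{lem:parthe} part 1; hence $\max_{x\in\NE} B(x)$ is bounded above by the value at that weighted-optimal configuration. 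For the lower bound, evaluating $B$ at the specialized equilibrium achieving $\alpha_{(A+I)e} e^*$ and applying the lower pointwise bound there gives a concrete equilibrium attaining at least $nb(e^*) + (\alpha_{(A+I)e}-n)\sigma_b c e^*$, so the maximum over $\NE$ dominates it. I would be careful throughout that the LCP characterization \eqref{eq:LCP} guarantees $\mathcal{E}_k(x) \ge e^*$ for all $k$, keeping all the $(\mathcal{E}_k(x) - e^*)$ terms nonnegative so the sign manipulations with $\sigma_b \in [0,1]$ are valid.
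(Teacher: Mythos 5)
Your proposal is correct and follows essentially the same route as the paper's proof: sum the pointwise equilibrium bounds of Lemma \ref{lem:parthe} part 3 over all nodes, recognize $\sum_k \mathcal{E}_k(x)$ as the $(A_G+I_n)e$-weighted aggregate effort, and invoke Lemma \ref{lem:parthe} part 1 to replace its maximum over $\NE$ by $\alpha_{(A+I)e}e^*$. Your added care in separating the upper-bound argument (the linear bound holds for every equilibrium, hence for the $B$-maximizer) from the lower-bound argument (evaluate $B$ at the weighted specialized equilibrium, which the maximum dominates) makes explicit a step the paper leaves implicit, but it is the same proof.
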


\begin{proof}
This follows from taking summation over all nodes in inequalities of Lemma \ref{lem:parthe} point 3 and using the result of in Lemma \ref{lem:parthe} point 1 which states that maximum weighted aggregate effort is achieved by weighted specialized equilibria. The latter result is required in the following context: when the summation over all nodes is taken, the term $\sum_k (x_k + \sum_{j\in N_k} x_j)$ corresponds to the $(A_G+I_n)e$-weighted aggregate effort. The maximum of this term among all Nash equilibria $\max_{x\in \NE} \sum_k (x_k +\sum_{j\in N_k} x_j)$ is achieved by the specialized equilibrium of corresponding to the $(A_G+I_n)e$-weighted maximum independent set and is thus equal to $\alpha_{(A+I)e}e^*$.
\end{proof}

Using the bounds in Lemma \ref{lem:3}, in the following corollary we find the best performing Nash equilibria in the limits $\sigma_b\to 0$ and $\sigma_b \to 1$. 

\begin{corollary} (Limit of best Nash equilibria for maximizing aggregate benefit in public goods game)

As $\lim_{\sigma_b\to 0}$, the lower inequality in \eqref{eq:8} holds as an equality and as $\lim_{\sigma_b\to 1}$, the upper inequality in \eqref{eq:9} holds as an equality i.e.,
\begin{enumerate}
    \item When $\sigma_b\to 0$, \begin{align}
        \lim_{\sigma_b\to 0} \max_{x\in \NE} B(x) = nb(e^*) \label{eq:epidemic game0}
    \end{align}
    \item When $\sigma_b\to 1$, \begin{align}
        \lim_{\sigma_b\to 1} \max_{x\in \NE} B(x) = nb(e^*) -cne^* + c\alpha_{(A+I)e}(G)e^* \label{eq:epidemic game1}
    \end{align}
\end{enumerate} \label{cor:1}
\end{corollary}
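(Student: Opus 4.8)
The plan is to take the two-sided bounds established in Lemma \ref{lem:3} and evaluate them in the two limiting regimes of $\sigma_b$, using the fact that the quantity $\alpha_{(A+I)e}(G)$ is a purely graph-theoretic constant that does not depend on $\sigma_b$. The key observation driving both limits is that $\sigma_b = \frac{b(ne^*) - b(e^*)}{ce^*(n-1)}$, so controlling $\sigma_b$ is equivalent to controlling the gap between $b(ne^*)$ and $b(e^*)$ relative to the linear benchmark $ce^*(n-1)$. I would treat the two cases separately, each by a squeeze argument between the relevant lower and upper bounds of \eqref{eq:8} and \eqref{eq:9}.

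First I would handle the case $\sigma_b \to 0$. Here I use the pair of inequalities in \eqref{eq:9}, namely $nb(e^*) + (\alpha_{(A+I)e} - n)\sigma_b c e^* \leq \max_{x\in \NE} B(x)$, together with the upper bound $\max_{x\in \NE} B(x) \leq nb(ne^*)$ from \eqref{eq:8}. As $\sigma_b \to 0$, the lower bound in \eqref{eq:9} collapses to $nb(e^*)$ since the correction term carries a factor of $\sigma_b$. Simultaneously, $\sigma_b \to 0$ forces $b(ne^*) \to b(e^*)$ (because the numerator of $\sigma_b$ vanishes while $ce^*(n-1)$ stays fixed), so the upper bound $nb(ne^*) \to nb(e^*)$ as well. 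Squeezing $\max_{x\in \NE} B(x)$ between two quantities that both tend to $nb(e^*)$ yields \eqref{eq:epidemic game0}.

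For the case $\sigma_b \to 1$, I would instead squeeze using the upper bound in \eqref{eq:9}, $\max_{x\in \NE} B(x) \leq nb(e^*) + (\alpha_{(A+I)e} - n)ce^*$, against the lower bound in \eqref{eq:9}, $nb(e^*) + (\alpha_{(A+I)e} - n)\sigma_b ce^* \leq \max_{x\in \NE} B(x)$. As $\sigma_b \to 1$, the lower bound tends to $nb(e^*) + (\alpha_{(A+I)e} - n)ce^*$, which coincides exactly with the fixed upper bound. Hence both bounds converge to the common value $nb(e^*) - cne^* + c\alpha_{(A+I)e}(G)e^*$, giving \eqref{eq:epidemic game1}. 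In both regimes the argument is a clean sandwiching once the $\sigma_b$-dependence of the endpoints is made explicit.

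The main subtlety, rather than a genuine obstacle, is being careful about what is being held fixed as the limit is taken: one must interpret $\sigma_b \to 0$ and $\sigma_b \to 1$ as limits over a family of benefit functions (or effort/cost configurations) in which $e^*$, $c$, $n$, and the graph $G$ are held fixed while only the curvature of $b$ varies, so that $\alpha_{(A+I)e}(G)$ and the linear terms are genuine constants. With that interpretation the squeeze is immediate; the only point needing a line of justification is that $\sigma_b \to 0$ indeed forces $b(ne^*) \to b(e^*)$ in the upper bound of \eqref{eq:8}, which follows directly by solving the definition of $\sigma_b$ for $b(ne^*)$. I would state this explicitly and then invoke the squeeze theorem in each case.
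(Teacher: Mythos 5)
Your proposal is correct and follows essentially the same route as the paper: both parts are squeeze arguments between the bounds of Lemma \ref{lem:3}, with the observation that $\sigma_b \to 0$ forces $b(ne^*) \to b(e^*)$ handling the first limit. The only cosmetic difference is in part 2, where the paper re-derives the limiting value by evaluating $B(s)$ at the specialized equilibrium $s$ corresponding to the $(A_G+I_n)e$-weighted maximum independent set (via Lemma \ref{lem:parthe}, point 4) and sandwiches $\max_{x\in \NE} B(x)$ between $B(s)$ and the upper bound of \eqref{eq:9}, whereas you squeeze directly between the two bounds of \eqref{eq:9} --- equivalent content, since that lower bound is itself obtained from $s$.
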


\begin{proof}
\begin{enumerate}
    \item Follows from \eqref{eq:8} and $\lim_{\sigma_b\to 0} b(ne^*) = b(e^*)$ (Lemma \ref{lem:parthe}, point 4).
    \item We will proceed in two steps. First, we will show that  \begin{align}
        \lim_{\sigma_b\to 1} B(x) &= nb(e^*) - cne^* + cE_{(A_G+I_n)e}(x) \label{eq:epidemic game2}
    \end{align} where $E_{w}(x)$ is the $w$-weighted effort. Using this, we prove \eqref{eq:epidemic game1}.
    \begin{enumerate}
     \item For any equilibrium $x$, 
         \begin{align*}
    \lim_{\sigma_b \to 1}B(x) &= \sum_k \lim_{\sigma_b\to 1} b(\mathcal{E}_k(x))\\
    &= \sum_k \bigg(b(e^*) + c\cdot(\mathcal{E}_k(x) -e^*)\bigg) \hspace{1.6em} \\
    &= nb(e^*) -cne^* + cE_{(A_G+I_n)e}(x), 
\end{align*}
where the second equality follows from Lemma \ref{lem:parthe}, point 4.
\item Let $s$ be the specialized equilibrium corresponding to the $(A_G+I_n)e$-weighted maximum independent set. Since $E_{(A_G+I_n)e}(s) = \alpha_{(A+I)e}(G)e^*$, we have $\lim_{\sigma_b\to 1} B(s) = nb(e^*) - cne^* + c\alpha_{(A+I)e}e^*$.
        Since $s\in \NE$, $B(s)$ is weakly worse off than $\max_{x\in \NE} B(x)$. $\max_{x\in \NE} B(x)$ is bounded above by \eqref{eq:9}. Note that this upper bound is equal to $\lim_{\sigma_b\to 1}B(s)$ i.e.,
    \begin{align*}
        B(s) \leq \max_{x\in \NE} B(x) \leq nb(e^*) -cne^* + c\alpha_{(A+I)e}(G)e^* = \lim_{\sigma_b\to 1} B(s)
    \end{align*} Therefore, as $\sigma_b\to 1$, $\lim_{\sigma_b\to 1}\max_{x\in \NE} B(x) = \lim_{\sigma_b\to 1} B(s) = nb(e^*) -cne^* + c\alpha_{(A+I)e}e^*$.
    \end{enumerate}
\end{enumerate} 
\end{proof}

\subsection{Proof of Theorem \ref{th:struc}}
\begin{enumerate}
    \item From Lemma \ref{lem:parthe} point 1, the maximum aggregate effort among all Nash equilibria is given by $\max_{x^*\in \NE} \sum_{k\in V} x^*_k = \alpha(G) \cdot e^*$, where $\alpha(G)$ is the size of the maximum independent set of graph $G$. Since $\alpha(G)$ is a graph property which is unaffected by the signal, $\max_{x^*\in \NE}\mathbb{E}_{\iota, \iota_S}[\sum_k x^*_k]= \alpha\cdot \mathbb{E}_{\iota, \iota_S}[e^*]$.
    \item From LCP \eqref{eq:LCP}, whenever the inverse of $A_G+I_n$ exists, a unique distributed equilibrium exists.
    For this equilibrium, the equilibrium effort profile can be found by solving 
\begin{align*}
    x^* &= (A_G+I_n)^{-1} e^*. \end{align*} 
    Let $M=(A_G+I_n)^{-1}$, with the $kj^{th}$ entry given by $m_{kj}$.
    The effort of individual $k$, and therefore the aggregate effort in this equilibrium is given by
    \begin{align*}
    x^*_k &= e^*\sum_j m_{kj}\\
    \sum_k x^*_k &= e^* \sum_k\sum_j m_{kj}.
\end{align*} Notice that $m(G):=\sum_{k}\sum_{j}m_{kj}$ is a graph property.

The inverse of $(A_G+I_n)$ does not exist when two neighbouring nodes have the same set of neighbours. This situation corresponds to that of multiple distributed equilibria and can be reduced to an equivalent situation of the unique distributed equilibrium by performing the following transformation:

Remove the row and column corresponding to any of the two neighbouring nodes having the same neighbours. On doing this for all such pairs of nodes, the resultant matrix has an inverse and has the same aggregate effort as the original matrix. This aggregate effort would be given by $\text{(some graph property)}\cdot e^*$. 

\item Follows from equation \eqref{eq:eqobj2} and Lemma \ref{lem:2}. 
\item Follows from equation \eqref{eq:eqobj2} and Corollary \ref{cor:1}. Equation \eqref{eq:epidemic game0} and Lemma \ref{lem:2} implies that in the realm of $\sigma_b\to 0$, all equilibria perform the same in terms of the aggregate benefit. The equilibrium maximizing the aggregate benefit when $\sigma_b\to 1$, is the specialized equilibrium corresponding to the $(A_G+I_n)e$-weighted maximum independent set.
\end{enumerate}

\subsection{Proof of Theorem \ref{th:policy}}
From the structural characterization in Theorem \ref{th:struc}, we saw that there were two main reductions of the government objectives: $e^*$ and $b(e^*)$. Recall that the optimal signaling policy depends on the curvature of the government's payoff as a function of the public's belief (cf. \cite{persuasion}, or Section \ref{sec:preliminaries}) i.e., when the government's payoff is strictly concave in public's belief no disclosure is optimal and when it is strictly convex, full disclosure is optimal. When $\mu$ is the belief of individuals of the high infection state, from the best-response condition, the effort $e^*$ satisfies 
\begin{align}
    \mu b'(e^*;h) + (1-\mu) b'(e^*;l) = c. \label{eq:b'}
\end{align}
The curvature of $e^*$ with respect to $\mu$, can be found by differentiating \eqref{eq:b'} with respect to $\mu$
\begin{align}
    {e^*}'(\mu) &= \frac{b'(e^*;l) - b'(e^*;h)}{b''(e^*)} \label{eq:x'}\\
    {e^*}''(\mu) &= \frac{{e^*}'}{-b''(e^*)}\bigg[ b'''(e^*){e^*}' + 2(b''(e^*;h) - b''(e^*;l))\bigg]. \label{eq:x''}
\end{align}

Note that from Assumption \ref{asu:1}, ${e^*}'(\mu)>0$, $\forall \mu$. Thus, $\frac{{e^*}'}{-b''(e^*)}>0$ in \eqref{eq:x''}. Also note, from \eqref{eq:x'}, $\Delta b' := b'(x;l) - b'(x;h) <0$ since $b$ is a concave function. Recall $\mathcal{R}(b;\mu):=2\mathcal{A}(\Delta b) - \mathcal{P}(b)$. Then $e^*(\mu)$ is strictly concave (convex) if the term in the square bracket of \eqref{eq:x''} is negative (positive) which can be expressed in terms of the sign of $\mathcal{R}(b;\mu)$ to give
\begin{align}
    e^*(\mu) =\begin{cases}
			\text{strictly convex in }\mu, & \text{if } \hspace{0.5em} \mathcal{R}(b;\mu) <0 , \hspace{0.5em} \forall \mu\\
            \text{strictly concave in }\mu, & \text{if } \hspace{0.5em} \mathcal{R}(b;\mu) >0, \hspace{0.5em} \forall \mu.
		 \end{cases} \label{eq:x*}
\end{align} This proves Theorem \ref{th:policy} parts 1 and 2.

Similarly, when the government's objectives reduce to $b(e^*)$, the optimal policy can be obtained by studying its curvature with respect to $\mu$ by differentiating and simplifying its expression given by $b(e^*(\mu))=\mu b(e^*;h)+(1-\mu)b(e^*;l)$. Recall $\tilde{\mathcal{R}}(b;\mu):= 2\mathcal{A}(\Delta b)-\mathcal{P}(b)-\mathcal{A}(b)$. Then, \begin{align}
    b(e^*(\mu)) =\begin{cases}
			\text{strictly convex in } \mu, & \text{if } \hspace{0.5em} \tilde{\mathcal{R}}(b;\mu) <0 , \hspace{0.5em} \forall \mu\\
            \text{strictly concave in } \mu, & \text{if } \hspace{0.5em} \tilde{\mathcal{R}}(b;\mu) >0, \hspace{0.5em} \forall \mu.
		 \end{cases} \label{eq:b(x*)}
\end{align} This proves Theorem \ref{th:policy} parts 3 and 4(a).

For part 4(b), when $\sigma_b\to 1$, this objective being a positive linear combination of the $e^*$ and $b(e^*)$. Since $\mathcal{A}(b)>0$, $\tilde{\mathcal{R}}(b;\mu)>0 \implies$ $\mathcal{R}(b,\mu)>0$ and $\mathcal{R}(b;\mu)<0 \implies$ $\tilde{\mathcal{R}}(b,\mu)<0$, $b(e^*)$'s no disclosure conditions and $e^*$'s full disclosure conditions are sufficient for ensuring no disclosure and full disclosure for positive linear combinations of $e^*$ and $b(e^*)$ respectively.

\vspace{1.5em}

\subsubsection{Proof of Corollary \ref{cor:suff_cond}}

$\mathcal{R}(b;\mu)>0$ if and only if  
\begin{align}
    \mu\bigg[ 2\frac{(\Delta b)''}{(\Delta b)'} (\Delta b)'' - (\Delta b)''' \bigg] &> 2\frac{(\Delta b)''}{(\Delta b)'} b''(x;l) - b'''(x;l). \label{eq:Y,Z}
\end{align}
Note $Y(b(\cdot;h), b(\cdot;l)):= 2\frac{(\Delta b)''}{(\Delta b)'} (\Delta b)'' - (\Delta b)''' $ and $Z(b(\cdot;h), b(\cdot;l)) := 2\frac{(\Delta b)''}{(\Delta b)'} b''(x;l) - b'''(x;l)$ are independent of $\mu$. Moreover, if $Z(b(\cdot;h), b(\cdot;l))$ is negative and $Y(b(\cdot;h), b(\cdot;l))$ is positive for all values of $x$, then the above condition holds for all $\mu$, thereby from \eqref{eq:x*}, giving a sufficient condition for full information as the optimal signaling policy. Similar calculations show that the sufficient conditions for no information to be the optimal signaling policy is $Y(b(\cdot;h), b(\cdot;l))<0$ and $Z(b(\cdot;h), b(\cdot;l))>0$ $\forall x$.

\subsection{Proof of Theorem \ref{th:partial policy}}

\begin{figure}[htp]
     \centering
     \begin{subfigure}[b]{0.495\textwidth}
         \centering
         \includegraphics[width=\textwidth]{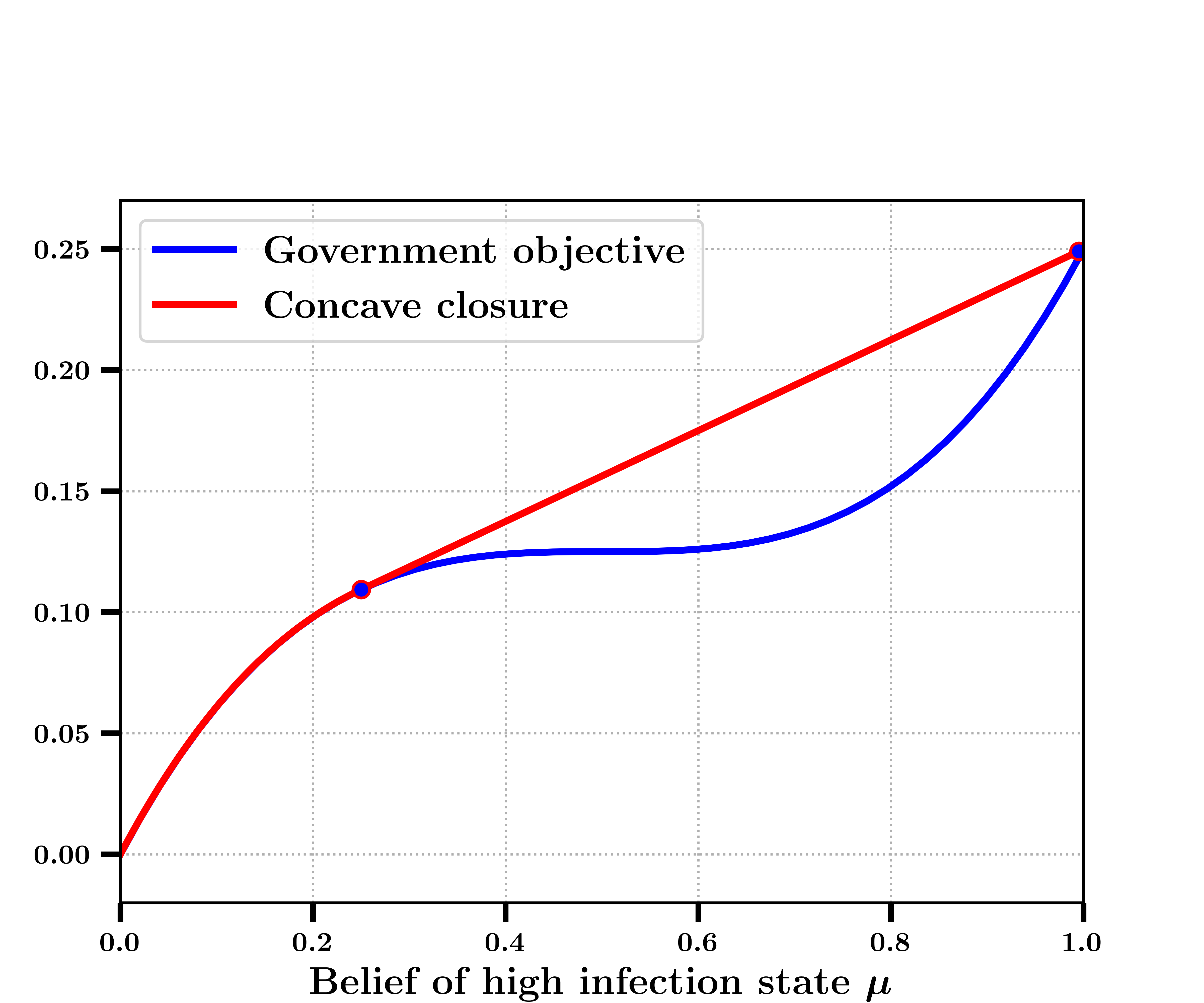}
     \end{subfigure}
     \hfill
     \begin{subfigure}[b]{0.495\textwidth}
         \centering
         \includegraphics[width=\textwidth]{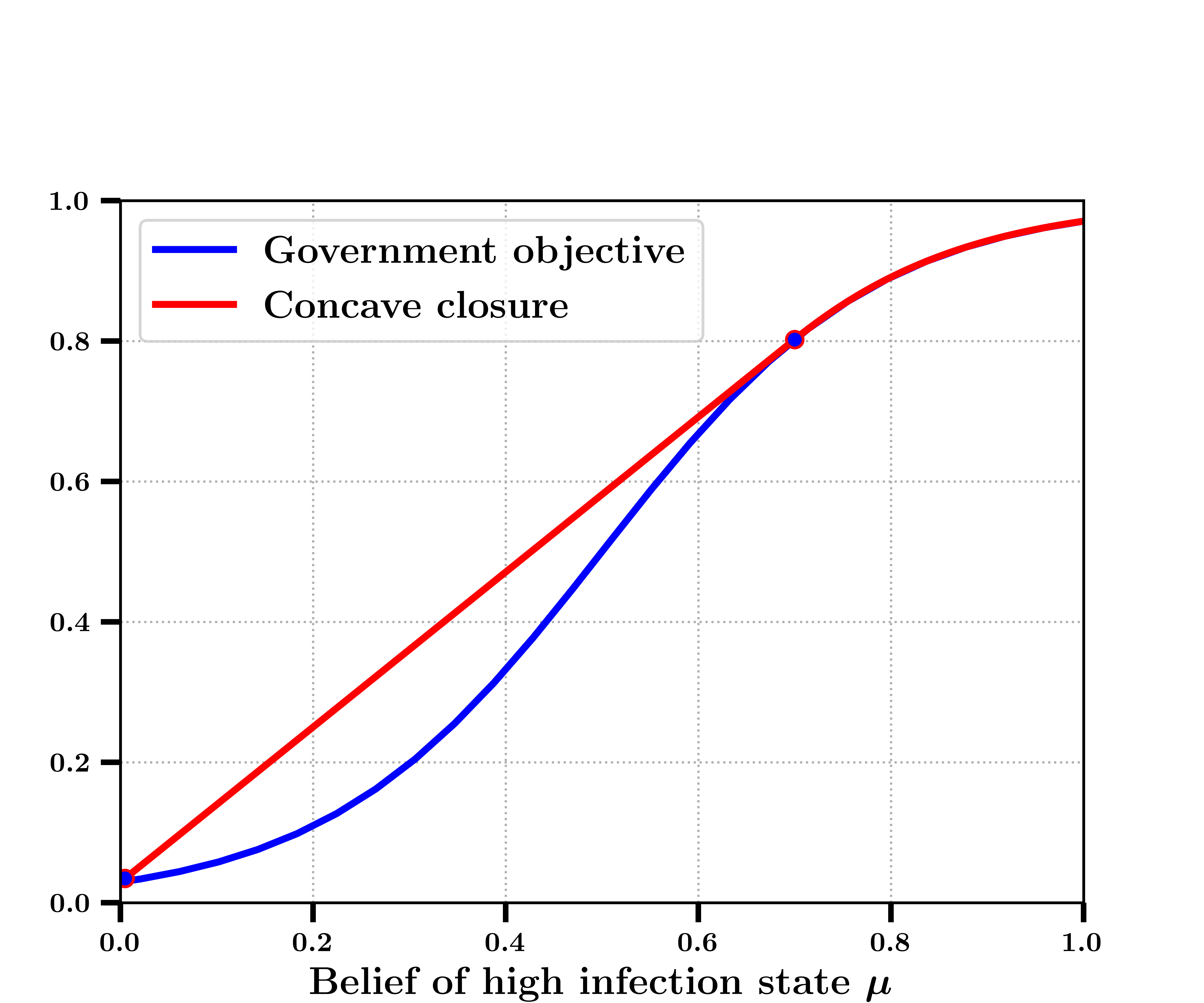}
     \end{subfigure}
        \caption{(L) When government objective is first strictly concave and then strictly convex (R) When government objective is first strictly convex and then strictly concave.}
        \label{fig:thm 3}
\end{figure}


Let $\mu_1:=\sup\{\mu:\forall \tilde{\mu}<\mu, \mathcal{O}(\tilde{\mu}) = \mathcal{C}_{\mathcal{O}}(\tilde{\mu}) \}$, where $\mathcal{O}(\mu)=\mathcal{O}(x^*;\pi)$ is the value of the government objective \eqref{eq:obj1}, \eqref{eq:obj2} when the posterior belief induced is $\mu$, and $\mathcal{C}_{\mathcal{O}}(\mu)$ is the value of concave closure of $\mathcal{O}$ at $\mu$, as defined in \eqref{eq:concave closure}.
Also let $\mu_2:=\inf \{\mu:\forall \tilde{\mu}>\mu, \mathcal{O}(\tilde{\mu}) = \mathcal{C}_{\mathcal{O}}(\tilde{\mu}) \}$. 
According to the concave closure arguments for determining the optimal policy discussed in \cite{persuasion}, the policy which generates beliefs $\mu_1$ and $\mu_2$ through the two signals sent corresponds to the optimal policy. When $p_h:=\pi(\iota_S=h|\iota=h)$, $p_l:=\pi(\iota_S=l|\iota=l)$ and $p_0$ is the prior belief of the high infection state, the posterior distributions of the high infection state on receiving signals $h$ and $l$ correspond to $(p_0p_h)/(p_0p_h+(1-p_0)(1-p_l))$ and $(p_0(1-p_h))/(p_0(1-p_h)+(1-p_0)p_l)$ respectively.

Consider the government's objective as in Figure \ref{fig:thm 3}(L), which is strictly concave at all values of $\mu$ less than some $\mu_d$, and is strictly convex $\forall \mu>\mu_d$. In this case, $\mu_2=1$, since if not (i.e., if $\mu_2<1$), by definition of $\mu_2$ and definition of concave closure, in the interval $(\mu_2,1]$ the government's objective is concave which is a contradiction to the hypothesis assumption that $\mathcal{O}(\mu)$ is strictly convex $\forall \mu\in(\mu_d,1]$. Equating $\mu_2=1$ to the explicitly written out posterior probability correspond to the case $p_l=1$. This is the downplaying policy, as defined in  \eqref{eq:down}. Following \eqref{eq:x*}, $e^*(\mu)$ is concave for $\mu \in [0,\mu_d)$, if $\mathcal{R}(b;\mu)>0\ \forall \mu<\mu_d$ and convex for $\mu \in(\mu_d,1]$  if $\mathcal{R}(b;\mu)<0\ \forall \mu>\mu_d$. Similarly, the results for the objective $b(e^*)$ follow from \eqref{eq:b(x*)}. 

Similarly, when the government's objective is first strictly convex and then strictly concave in $\mu$ (Figure \ref{fig:thm 3})(R), $\mu_1=0$ (since if $\mu_1>0$, then for all $\mu\in[0,\mu_1)$, by definition of $\mu_1$ and definition of concave closure, $\mathcal{O}(\mu)$ is concave, contradicting the hypothesis of the theorem), which when equated to the explicitly written formula for the posterior probability gives $p_h=1$. This is the exaggeration policy, as defined in \eqref{eq:exag}.

\subsubsection{Proof of Proposition \ref{prop:invariance}}

Let $p_0=\mu_0(h)$ be the prior belief of the high infection state.
\begin{enumerate}
    \item For policy $(a,b)\in (p_h,p_l)$, the distribution of posterior beliefs is given by 
\begin{align*}
    \mu_{(a,b)}(r=h|s=h) = \frac{ap_0}{ap_0+(1-b)(1-p_0)} \\
    \mu_{(a,b)}(r=l|s=l) = \frac{(b)(1-p_0)}{(1-a)p_0+b(1-p_0)}.
\end{align*}

For policy $(1-a,1-b)\in (p_h,p_l)$, the distribution of posterior beliefs is given by
\begin{align*}
    \mu_{(1-a,1-b)}(r=h|s=h) &= \frac{(1-a)p_0}{(1-a)p_0+b(1-p_0)} = \mu_{(a,b)}(r=h|s=l)\\
    \mu_{(1-a,1-b)}(r=l|s=l) &= \frac{(1-b)(1-p_0)}{ap_0+(1-b)(1-p_0)} = \mu_{(a,b)}(r=l|s=h).
\end{align*}

\item Under policy $(a,b)$, the expected utility given signal $\iota_S$ is given by \begin{align*}
    \mathbb{E}U_k(x|\iota_S) = \mu_{(a,b)}(\iota=h|\iota_S) b(x;h)+\mu_{(a,b)}(\iota=l|\iota_S) b(x;l)-c\cdot x_k.
\end{align*} Under policy $(1-a,1-b)$, the expected utility given signal $\iota_S^c$ is \begin{align*}
    \mathbb{E}U_k(x|\iota_S^c) &= \mu_{(1-a,1-b)}(\iota=h|\iota_S^c) b(x;h)+\mu_{(1-a,1-b)}(\iota=l|\iota_S^c) b(x;l)-c\cdot x_k\\ &= \mu_{(a,b)}(\iota=h|\iota_S) b(x;h)+\mu_{(a,b)}(\iota=l|\iota_S) b(x;l)-c\cdot x_k.
\end{align*} Thus, the best response under policy $(a,b)$ and signal $\iota_S$ is the same as the best response under policy $(1-a,1-b)$ and signal $\iota_S^c$.
\item 
The expected effort objective under both the policies are given by
\begin{align*}
    \mathbb{E}e^*_{(a,b)} &= p_0[ax^*_{h(a,b)} + (1-a)x^*_{l(a,b)}] + (1-p_0)[(1-b)x^*_{h(a,b)} + bx^*_{l(a,b)}]\\
    \mathbb{E}e^*_{(1-a,1-b)} &= p_0[(1-a)x^*_{h(1-a,1-b)} + a x^*_{l(1-a,1-b)}] + (1-p_0)[bx^*_{h(1-a,1-b)} + (1-b)x^*_{l(1-a,1-b)}],
\end{align*}
where $x^*_{h(\pi)}$ and $x^*_{l(\pi)}$ is the best response unilateral effort when $h$ and $l$ (respectively) are signalled under policy $\pi$.
\end{enumerate}

\end{document}